\documentclass[11pt]{article}
\pdfoutput=1

\usepackage{./preamble}

\title{Quantum Quasi-Monte Carlo: a window for pre-asymptotic quantum advantage}

\author[1,3]{Paolo Recchia\thanks{\tt paolo.recchia@sc.com}}
\author[1]{Zhan Yu}
\author[1]{Kelvin Koor}
\author[1,2]{Patrick Rebentrost\thanks{\tt cqtfpr@nus.edu.sg}}
\affil[1]{Centre for Quantum Technologies, NUS}
\affil[2]{School of Computing, NUS}
\affil[3]{Standard Chartered Singapore, SCB-Singapore}

\date{\today}

\begin{document}
\maketitle

\begin{abstract}
Numerical integration with Monte Carlo methods is a central computational task in many scientific and industrial applications, including financial derivative pricing and risk management. 
Classical Monte Carlo algorithms are computationally demanding: achieving an accuracy $\epsilon$ typically requires a number of function evaluations scaling as $O(1/\epsilon^2)$.
Quantum-accelerated Monte Carlo methods based on quantum amplitude estimation can in principle quadratically improve this dependence. However, \textit{quasi}-Monte Carlo methods have not been explored in the quantum context. In this work, we introduce a quantum quasi-Monte Carlo algorithm that combines low-discrepancy nets with quantum amplitude estimation. 
The proposed method prepares the quasi-random point set coherently in superposition.
The method does not yield an asymptotic improvement over classical quasi-Monte Carlo, since the total error separates into a discretization error, determined by the finite net, and a quantum estimation error. Instead, we explore a pre-asymptotic advantage window: for a target accuracy that would classically require $2^q$ low discrepancy points, one can prepare a higher-resolution net of size $2^Q$, with $Q>q$, in superposition and reach the same accuracy using significantly fewer function queries. This window can be controlled by tuning the circuit resolution and amplitude-estimation parameters, making the approach relevant for practical regimes where the number of queries is finite rather than asymptotically large.
\end{abstract}

\newpage
\tableofcontents

\newpage
\section{Introduction}
Monte Carlo methods for statistical estimation are ubiquitous in science and engineering. It is applicable when the computational problem can be reduced to the task of estimating a volume or more generally the integral of a function. Sampling points uniformly randomly can leave out significant regions in the domain of interest, especially for high-dimensional problems~\cite{glasserman2004monte}. Grid-based methods require a large number of points for high accuracy. Quasi-Monte Carlo methods have a long history of development and success in the context of estimation problems~\cite{halton1960efficiency, sobol_distribution_1967, faure1982discrepance, niederreiter1987point, niederreiter1992random, dick2010digital, glasserman2004monte}. These methods rely on sequences of deterministically generated points that provide good coverage of the domain of interest. The idea originates from Sobol's construction of point sets in the unit cube with small discrepancy~\cite{sobol_distribution_1967}, and was developed into the theory of digital nets and sequences by Niederreiter~\cite{niederreiter1987point, niederreiter1992random}, see also the textbook  \cite{dick2008exact}. 
The classical error-analysis tool is the Koksma-Hlawka inequality, which bounds the quadrature error by the product of the discrepancy of the point set and the Hardy-Krause variation of the integrand~\cite{glasserman2004monte}. With digital nets such as Sobol' sequences, for a fixed dimension $d$, and for functions of finite Hardy-Krause variation, the number of points needed to reach accuracy $\epsilon$ scales as $O(1/\epsilon)$ rather than $O(1/\epsilon^2)$, up to logarithmic factors,  competing with Heisenberg-limited quantum methods. We recommend Caflisch's review ~\cite{caflisch1998}, and Joe-Kuo~\cite{joe2008constructing} for their high-quality Sobol' constructions. 
Quasi-MC methods are used extensively in quantitative finance, in particular for pricing high-dimensional and path-dependent derivatives~\cite{glasserman2004monte}. 

Quantum superposition is shown to lead to improvements in estimating expectation values beyond the usual statistical lower bounds. The algorithmic essence of this advantage is formalized in the seminal amplitude estimation and quantum-accelerated Monte Carlo methods~\cite{brassard2000quantum, montanaro2015quantum}. These quantum methods mimic the classical method in the sense of preparing the same probability measures and evaluating the integral at the same points as conventional MC methods. The key subroutine is quantum amplitude amplification which enhances the `desired' part of a quantum state and suppresses the rest. The quantum-accelerated MC method promises to be useful for financial problems, in particular for the pricing of financial derivatives and risk analysis~\cite{montanaro2015quantum,rebentrost2018quantum,chakrabarti_threshold_2021,woerner_quantum_2019}. Detailed circuit constructions for state preparation and payoff encoding for vanilla, multi-asset and path-dependent options were shown in ~\cite{stamatopoulos_option_2020}. Because the quantum phase estimation subroutine underlying the original Quantum Amplitude Estimation (QAE) algorithm is costly on near-term hardware, several phase-estimation-free variants have since been proposed that retain the near-quadratic query scaling with substantially shallower circuits, notably Iterative QAE~\cite{grinko2021iterative} and Maximum Likelihood AE (MLAE)~\cite{suzuki2020amplitude}; see~\cite{intallura2023} for a recent survey. Resource estimates for reaching a practically useful quantum advantage in derivative pricing were given in~\cite{chakrabarti_threshold_2021}, and a quantum-accelerated multilevel Monte Carlo scheme for the underlying stochastic differential equations was proposed in~\cite{anlinden2021}. It was shown that that no net speedup over classical Monte Carlo remains when using the original Grover-Rudolph state preparation~\cite{herbert2021}, with a follow-up work  recovering some advantage again~\cite{herbert2022}. More recently, Chen, Li and Neufeld gave a complexity analysis of a quantum Monte Carlo algorithm for option pricing under general payoff functions~\cite{chenlineufeld2026}, and Shu et al.\ proposed a general-purpose quantum algorithm for numerical integration~\cite{shu2024}.

It is a natural and open question to ask how to employ the ideas of quasi-MC in a quantum algorithm and if there is the potential for advantages. We cannot expect a quantum quasi-MC algorithm to overcome the Heisenberg-limit, which is asymptotic, yet there is room to explore regimes where practical advantages could be possible. In that sense, there is room to explore \textit {pre-asymptotic} advantages, i.e., advantages not in the limiting case but rather in finite, and also realistic, regimes. Naturally, one has to venture away from asymptotic analysis, and consider numerical simulations and focus on particular regimes. 

In this work, we develop a quantum algorithm based on methods from quasi-Monte Carlo. In particular, we employ Sobol sequences in amplitude estimation. The main idea of this work can be summarized as follows. A classical qMC estimator based on $2^q$ Sobol points requires exactly $2^q$ evaluations of the integrand $f$, namely one function query for each sampled point. Our proposed QqMC algorithm generates a superposition over $2^Q=2^{q+l}$ Sobol points, where $l \in \mathbb{N}$. The number $2^Q$ of Sobol points is decoupled from the number of oracle uses of the integrand. In particular, the quantum algorithm queries the function on a high-resolution net of points, while the dependence of the QAE procedure on function queries is governed by the query budget $T$, rather than by $2^Q$ itself. The overall error naturally decomposes into two contributions: a discretization error arising from the finite Sobol approximation of the integral, and a quantum estimation error arising from the amplitude estimation procedure. We show that there are regimes where  substantially fewer queries suffice, and the width of this pre-asymptotic window can be controlled by the choice of  $Q$ and of the amplitude-estimation parameters. In our simulations, we focus on a class of convex functions that can be implemented with a reduced circuit overhead and demonstrate the pre-asymptotic advantage. \cref{fig:main} gives an overview of the key ideas of this work.

\begin{figure}
    \centering
    \includegraphics[width=1.02\linewidth]{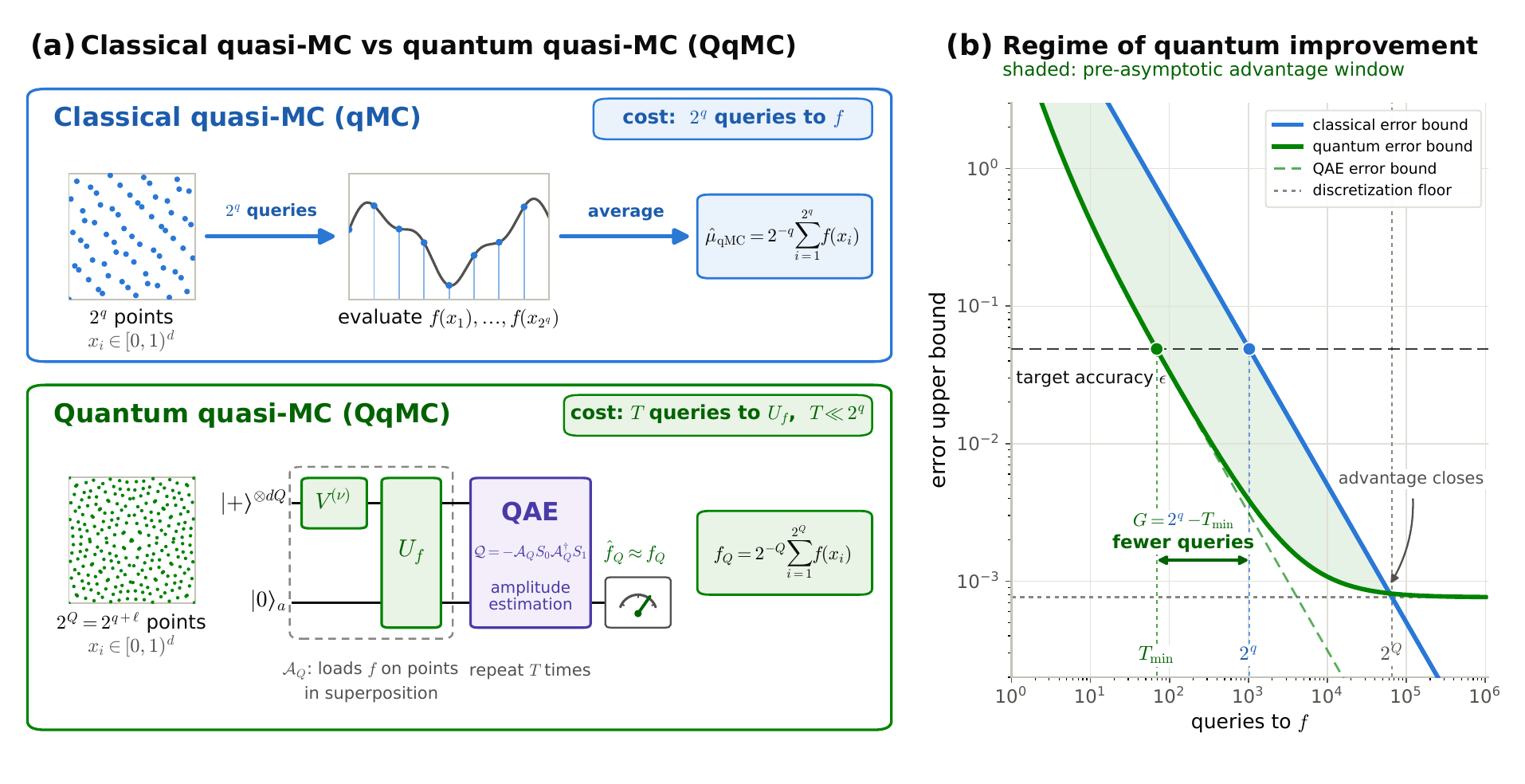}
    \caption{Panel (a). Overview of the key idea of the paper. Here, $\hat \mu_{\rm qMC}$ is the estimate with error $\leq {q^{d-1}}/{2^q}$
    Panel (b): Comparison of the qMC and QqMC bounds. 
    Classical qMC error bound is $\leq V(f) \tilde g(q) \sim {q^{d-1}}/{2^q}$.
    The QqMC error bound is $a(T) + V(f)\tilde g(Q)$. The QAE error is $a(T) = \pi/T + \pi^2/T^2$.
    The discretization floor is $V(f)\tilde g(Q)$.
    The parameters are $V(f) = 50, q = 10, l = 6, Q = q + l = 16,   \tilde g(q) = 2^{-q}$. This scenario would mean $G = 955$ fewer queries (factor $\approx 14$).
    }
    \label{fig:main}
\end{figure}

The paper is structured as follows.
In  \cref{sec:prelim} we introduce preliminaries to this work. 
In  \cref{sec:qqmc_preasymptotic}, we describe the construction of Sobol points in quantum superposition.
In  \cref{sec:QqMC},
we discuss in detail the window for pre-asymptotic advantage.
In  \cref{sec:implementation_and_numerics}, we describe the implementation, especially how to evaluate the target function on the Sobol' points and load the corresponding values into the amplitude of an ancilla qubit. 
In  \cref{sec:implementation_and_numerics}
we show our numerical implementation, and finish the paper with a discussion in  \cref{sec:concl}.

\section{Preliminaries}
\label{sec:prelim}

In this section, we present the key aspects of classical Monte Carlo and quasi-Monte Carlo theory relevant to our proposed Quantum quasi-Monte Carlo integration method. For the notation and results presented here, we mostly follow~\cite{glasserman2004monte}. For further additional details and definitions about quasi-Monte Carlo theory, we refer to \cref{sec:classical_monte_carlo}.

\subsection{Monte Carlo}

Monte Carlo methods are widely used to approximate high-dimensional integrals.
Consider a function $f:[0,1)^d\rightarrow \mathbb{R}$, and suppose that the objective is to calculate the integral
\begin{equation}
\label{eq:integ}
\mathbb{E}[f(u_1,\dots,u_d)] \coloneq \int_{[0,1)^d} f(u)\,du,
\end{equation}
where $u = (u_1,\dots,u_d)$ and the integer $d$ is the dimension. In classical Monte Carlo integration, one approximates the integral by sampling independent random points $U_i = (U_1,\dots,U_d)_i$ uniformly distributed on the unit hypercube. 
From the central limit theorem and Chebyshev’s inequality, we know that an estimator using $N$ samples  achieves the following error bound with probability at least $1 - \delta$
\begin{equation}
\label{eq:CMC}
    \epsilon_{MC} \coloneq \left| \int_{[0,1)^d} f(u)du - \frac{1}{N}\sum_{i=0}^{N-1} f(U_i) \right| \leq \frac{\sigma_f}{\sqrt{\delta N}},
\end{equation}
where $\sigma_f = \mathrm{Var}[f(U)]$, which corresponds to the well-known Monte Carlo convergence rate
\begin{equation}
\epsilon_{MC} = O(N^{-1/2}).    
\end{equation}

\subsection{Quasi-Monte Carlo}

Quasi-Monte Carlo (qMC) replaces random points with deterministic sequences chosen to fill the space more evenly than random points. Consider multidimensional points $x_i = (x^{(1)}_i, \dots, x^{(d)}_i)$ generated from a `carefully chosen' sequence \cite{glasserman2004monte}. The goal of quasi-Monte Carlo is to achieve a convergence rate close to
\begin{equation}
\label{eq:intqmc}
    \epsilon_{qMC} \coloneq \left| \int_{[0,1)^d} f(u)du - \frac{1}{N}\sum_{i=0}^{N-1} f(x_i) \right| = O\left(\frac{1}{N}\right).
\end{equation}
These results are nearly achievable by carefully and deterministically chosen sets of points, known as low-discrepancy sequences, whose definition is provided in  \cref{sec:classical_monte_carlo}. Among the most widely used and effective low-discrepancy sequences are Sobol' sequences~\cite{sobol_distribution_1967}. 

\paragraph{Sobol' sequences.}
Sobol' sequences~\cite{sobol_distribution_1967} are low-discrepancy $(t,d)$-sequences in base $2$, where $t$ is a parameter related to the quality of the sequence, often referred to as the quality parameter~\cite{dick2008exact}. For detailed definitions of low-discrepancy $(t,d)$-sequences and $(t,q,d)$-nets, as well as the meaning of the parameter $t$, we refer to  \cref{sec:classical_monte_carlo}. The main advantage of  Sobol's points is that their construction is entirely binary: the map from the binary digits of the integer index to the binary digits of the point coordinates is linear over $\mathbb{F}_2$ and can therefore be implemented using only XOR operations. This feature will be particularly convenient in the quantum-circuit construction.

Let $i \in \{ 0,1,\cdots, 2^q-1\}$ be a non-negative integer, then its binary representation is given by
\begin{equation}
i=a_0(i)+2a_1(i)+\cdots+2^{q-1}a_{q-1}(i),
\qquad a_\ell(i)\in\{0,1\},
\label{eq:sobol_index_binary}
\end{equation}
where $q$ represents the range of the integer. We collect the binary digits into the vector
\begin{equation}
\bar a_i=\bigl(a_0(i),a_1(i),\dots,a_{q-1}(i)\bigr)^\top \in \mathbb{F}_2^q.
\label{eq:sobol_a_vector}
\end{equation}

We begin with the one-dimensional construction. Let $i$ be a non-negative integer and $a_i$ its binary representation.
Let $V\in\mathbb{F}_2^{q\times q}$ be an upper-triangular invertible generator matrix, and define
\begin{equation}
\bar x_i=
\begin{pmatrix}
x_{1}(i)\\
x_{2}(i)\\
\vdots\\
x_{q}(i)
\end{pmatrix}
=
V\,\bar a_i\pmod 2.
\label{eq:sobol_linear_map}
\end{equation}
The bits $x_1(i),\dots,x_q(i)$ are then the binary digits of the $i$th Sobol' point. We highlight that $\bar x_i$ is  the binary representation of the $i$t Sobol point $x_i$,
\begin{equation}
x_i=\frac{x_1(i)}{2}+\frac{x_2(i)}{2^2}+\cdots+\frac{x_q(i)}{2^q}
=\sum_{j=1}^q \frac{x_j(i)}{2^j}.
\label{eq:sobol_point_1d}
\end{equation}

If $\bar v_j$ denotes the $j$th column of $V$, then  \cref{eq:sobol_linear_map} can be written equivalently as
\begin{equation}
\bar x_i
=
a_0(i)\bar v_1
\oplus
a_1(i)\bar v_2
\oplus\cdots\oplus
a_{q-1}(i)\bar v_q,
\label{eq:sobol_xor_form}
\end{equation}
where $\oplus$ denotes addition in $\mathbb{F}_2$, i.e., bitwise XOR. \cref{eq:sobol_xor_form} is the key computational representation: if each column $\bar v_j$ is stored as a computer word encoding the corresponding binary fraction, then the binary representation of $i$th Sobol point $\bar x_i$ is obtained by a sequence of XOR operations only. This is the form that is most useful for the circuit implementation developed later.\footnote{Alternative recursive implementations based on Gray codes are standard in classical software \cite{glasserman2004monte}, but they are not needed here. For our purposes, the linear formulation  \cref{eq:sobol_linear_map,eq:sobol_xor_form} is the most natural one.}

The essential ingredient in the Sobol' construction is therefore the choice of the generator matrix $V$, or equivalently of its columns. Arithmetic is performed over the binary field $\mathbb{F}_2$, so matrix addition is in the XOR sense. Following Sobol's original construction, the columns of $V$ are obtained from a primitive polynomial over $\mathbb{F}_2$,
\begin{equation}
p(x)=x^s+c_1x^{s-1}+\cdots+c_{s-1}x+1,
\qquad c_\ell\in\{0,1\}.
\label{eq:sobol_primitive_polynomial}
\end{equation}

Given  \cref{eq:sobol_primitive_polynomial}, one defines a sequence of odd integers $m_j$ recursively by
\begin{equation}
m_j
=
2c_1m_{j-1}
\oplus
2^2c_2m_{j-2}
\oplus\cdots\oplus
2^{s-1}c_{s-1}m_{j-s+1}
\oplus
2^s m_{j-s}
\oplus
m_{j-s},
\qquad j>s,
\label{eq:sobol_m_recurrence}
\end{equation}
and then the corresponding direction numbers are
\begin{equation}
v_j=\frac{m_j}{2^j}, \qquad j\geq 1.
\label{eq:sobol_direction_numbers}
\end{equation}
To initialize  \cref{eq:sobol_m_recurrence}, one must specify $m_1,\dots,m_s$. A minimal admissibility condition is
\begin{equation}
m_j \text{ odd}, \qquad 1\leq m_j < 2^j, \qquad j=1,\dots,s.
\label{eq:sobol_initial_values}
\end{equation}
Under  \cref{eq:sobol_initial_values}, all subsequent $m_j$ generated by  \cref{eq:sobol_m_recurrence} remain odd and satisfy $0<v_j<1$. Optimizing the initial values is an important practical issue, but it is outside the scope of the present work (for a detailed discussion of initialization and practical direction-number tables, see \cite{glasserman2004monte, dick2008exact, joe2008constructing, kuo_table} and the discussion in  \cref {sec:quality_t}.)

An immediate consequence of  \cref{eq:sobol_linear_map} is that, in one dimension, the first $2^q$ Sobol' points always form the dyadic uniform grid, up to permutation. Indeed, as $i$ ranges from $0$ to $2^q-1$, the vector of binary digits $\bar a_i$ ranges over all elements of $\mathbb{F}_2^q$. Since the leading $q\times q$ block of $V$ is upper triangular and invertible, the map $\bar a_i \mapsto \bar x_i$ is a bijection on $\mathbb{F}_2^q$. The one-dimensional Sobol' net is simply a permutation of the uniform dyadic grid. Equivalently, it is a permutation of the first $2^q$ points of the Van der Corput~\cite{niederreiter1987point} sequence in base $2$, and hence of the one-dimensional Halton~\cite{halton1960efficiency} sequence in base $2$ as well.

The multidimensional construction is obtained by assigning one generator matrix to each coordinate. More precisely, for each dimension $\nu=1,\dots,d$, choose a generator matrix
\begin{equation}
V^{(\nu)}\in\mathbb{F}_2^{q\times q},
\label{eq:sobol_generator_matrix_nu}
\end{equation}
form
\begin{equation}
\bar x_i^{(\nu)}=V^{(\nu)}\bar a_i\pmod 2,
\label{eq:sobol_coordinate_bits}
\end{equation} 
and set
\begin{equation}
x_i^{(\nu)}=\sum_{j=1}^q \frac{x_j^{(\nu)}(i)}{2^j}.
\label{eq:sobol_coordinate_value}
\end{equation}
The $i$th point in dimension $d$ is then
\begin{equation}
x_i=
\bigl(x_i^{(1)},x_i^{(2)},\dots,x_i^{(d)}\bigr)\in[0,1)^d.
\label{eq:sobol_d_dimensional_point}
\end{equation}
Different coordinates are generated from different primitive polynomials, hence from different recurrences and different sets of direction numbers.

\paragraph{Error bounds in quasi-Monte Carlo.}\label{paragraph:eb_qmc}
From the standard results in  \cref{cor:up_bound_b2_disc} and  \cref{th:khineq}, we obtain an explicit error bound for the qMC method based on a Sobol' $(t,q,d)$-net:
\begin{equation}
\label{eq:err_qmc_sob_pap}
\epsilon_{qMC} \le g V(f), \quad g \coloneq \frac{1}{2^{q- t}} \sum_{\nu=0}^{d-1} \binom{q-t}{\nu}, 
\end{equation}
which holds for any value of $q$. Note the definition of the combinatorial factor $g$ which will be adapted later. Here, $V(f)$ denotes the Hardy-Krause variation of $f$, as defined in  \cref{sec:HKvariation}. Asymptotically, for any low-discrepancy $(t,d)$-sequence we obtain (see  \cref{sec:HKvariation})
\begin{equation}
\label{eq:asymp_errqmc}
 \epsilon_{qMC}  = O \left(\frac{(\log N)^d}{N}\right),
\end{equation}
which, for low dimension $d$, scales nearly as the desired $O(1/N^{1-\varepsilon})$, where $\varepsilon > 0$ and the logarithmic factor can be absorbed into any power of $N$. 

It is worth commenting on the dependence of this upper bound on the dimension $d$. In contrast with the classical Monte Carlo bound, whose convergence rate does not explicitly depend on the dimension, the qMC upper bound exhibits a more delicate dimensional dependence which is also tightly related to the quality parameter $t$ (for a more detailed discussion on the dimension dependence refers to  \cref{sec:HKvariation} and  \cref{sec:quality_t}). 

Our main focus is the implementation of the Sobol sequence~\cite{sobol_distribution_1967} in a quantum circuit. Therefore, after the description of the implementation of Sobol points in a quantum computer,  we will also provide in  \cref{sec:quality_t} (and additional details in  \cref{app:fit_ansatz}), some insight into the behavior of the parameter $t$ in the context of Sobol sequences.

\subsection{Quantum amplitude estimation}
A $Q$-qubit register has computational basis states $\ket{i}$ indexed by integers $0 \leq i < 2^Q$. Applying Hadamard gates to the zero state prepares the uniform superposition
$
H^{\otimes Q} \ket{0} = \frac{1}{\sqrt{2^Q}} \sum_{i=0}^{2^Q-1} \ket{i}.
$
Any reversible classical computation can be embedded into a unitary operation. For a map $F$, the unitary transformation is
\[
U_F : \ket{x}\ket{y} \mapsto \ket{x}\ket{y \oplus F(x)}.
\]
When $F$ is a binary linear map, this unitary can be implemented by CNOT gates, which allow us to generate Sobol points coherently on a quantum computer.

Quantum amplitude estimation (QAE) estimates a probability encoded as the amplitude of a ``good'' subspace~\cite{brassard2000quantum}. The function under consideration is evaluated at any set of points, for example the Sobol points. After the corresponding state encoding the points has been prepared, the next step is to encode the function value into the amplitude of an ancilla qubit. Assume WLOG that
\begin{equation}
    0\le f(x_i)\le 1,
\end{equation}
where $x_i$ are the multi-dimensional Sobol points defined in \cref{eq:sobol_coordinate_value} and \cref{eq:sobol_d_dimensional_point}. We define a unitary $\mathcal A_Q$ which includes both the Sobol preparation and the amplitude-loading step:
\begin{equation}
\label{eq:AQ_state}
\mathcal A_Q\ket{ 0} = \frac{1}{\sqrt{2^Q}} \sum_{i=0}^{2^Q-1} \ket{x_i} \left(\sqrt{1-f(x_i)}\ket{0} + \sqrt{f(x_i)}\ket{1} \right).
\end{equation}
The last qubit is the flag qubit. Measuring it w.r.t. the computational basis yields $\ket{1}$ with probability
\begin{equation}
\label{eq:fQ}
f_Q = \frac{1}{2^Q} \sum_{i=0}^{2^Q-1} f(x_i),
\end{equation}
which is exactly the quasi-Monte Carlo estimator associated with the selected Sobol point set.

The implementation of the amplitude-loading map depends on the structure of $f$. In the present work, this part is treated separately in  \cref{sec:func_loading}, where we describe the arithmetic circuit used to compute a specific class of functions $f(x_i)$, or an appropriate fixed-point approximation of it, and then load it into the flag qubit. The important point for the present discussion is that the Sobol preparation and the function loading together define the state-preparation unitary $\mathcal A_Q$ in \cref{eq:AQ_state}.

Once the state \cref{eq:AQ_state} has been prepared, amplitude estimation can be applied to estimate the success probability $f_Q$. Writing
\begin{equation} 
\mathcal A_Q\ket{0} = \sqrt{1-f_Q}\,\ket{\psi_0}\ket{0} + \sqrt{f_Q}\,\ket{\psi_1}\ket{1},
\end{equation}
and parameterizing the target amplitude as
\begin{equation}
f_Q=\sin^2(\theta), \qquad \theta\in[0,\pi/2],
\end{equation}
one defines the Grover iterate
\begin{equation}
\label{eq:grover_qae}
\mathcal Q = -\mathcal A_Q S_0 \mathcal A_Q^\dagger S_1,
\end{equation}
where $S_0$ is the reflection about the all-zero input state and $S_1$ is the reflection that changes the phase of the good subspace, namely the subspace in which the flag qubit is equal to $\ket{1}$. The operator $\mathcal Q$ acts as a rotation by angle $2\theta$ on the two-dimensional invariant subspace spanned by the good and bad components.

Canonical Quantum Amplitude Estimation (QAE)~\cite{brassard2000quantum} estimates $\theta$, and therefore $f_Q$, by applying Quantum Phase Estimation (QPE) to the Grover operator $\mathcal Q$. In the standard construction, an evaluation register is first prepared in uniform superposition, controlled powers
\begin{equation}
     \mathcal Q^{2^0},\mathcal Q^{2^1},\dots, \mathcal Q^{2^{m-1}}
\end{equation}
are applied, and an inverse Quantum Fourier Transform is then performed before measurement~\cite{nielsen2010quantum}. Therefore, in canonical QAE, QPE and the inverse QFT are essential ingredients of the estimation procedure. If the total number of Grover applications is denoted by $T=\Theta(2^m)$, the canonical QAE estimator $\hat f_Q(T)$ satisfies the Brassard \emph{et al.} error bound~\cite{brassard2000quantum}
\begin{equation}
\label{eq:aerr}
|\hat f_Q(T)-f_Q| \le \frac{2\pi\sqrt{f_Q(1-f_Q)}}{T} + \frac{\pi^2}{T^2} \le  \frac{\pi}{T}+\frac{\pi^2}{T^2} =: a(T),
\end{equation}
with probability at least $8/\pi^2$. The success probability can be amplified by repeating the procedure and taking the median estimate~\cite{woerner_quantum_2019}. 

\section{Quantum quasi-Monte Carlo} 
\label{sec:qqmc_preasymptotic}

In contrast with standard Quantum Monte Carlo approaches \cite{brassard2000quantum,montanaro2015quantum,woerner_quantum_2019,rebentrost2018quantum}, in this work, we propose estimating the finite average over a deterministic low-discrepancy point set, namely, Sobol points, rather than over randomly sampled uniform points. We refer to this class of algorithms as \emph{Quantum quasi-Monte Carlo (QqMC)}. 
In this section, we provide the key aspects of the algorithm. 

\subsection{Sobol' point preparation}
\label{sec:quantum_sobol}

We now describe how the deterministic Sobol point construction is embedded into a quantum circuit and combined with QAE. The purpose of this construction is to prepare a uniform superposition over Sobol points, coherently evaluate the target function on those points, and encode the resulting quasi-Monte Carlo estimator into the success probability of an ancilla qubit.
We denote by $Q$ the number of qubits to generate $2^Q$ Sobol points per dimension. The Sobol points preparation consists of the following operations
\begin{enumerate}
\item The superposition over all Sobol indices \(i\) is prepared by applying Hadamard gates,
\begin{equation}
H^{\otimes Q}\ket{0^Q} = \frac{1}{\sqrt{2^Q}} \sum_{i=0}^{2^Q-1}
\ket{i},
\end{equation}
\item As specified in \cref{eq:sobol_linear_map}, for each dimension \(\nu\in\{1,\dots,d\}\), the Sobol construction is specified by an invertible upper-triangular binary generator matrix
\begin{equation}
V^{(\nu)}=
\begin{pmatrix}
1 & v^{(\nu)}_{0,1} & v^{(\nu)}_{0,2} & \cdots & v^{(\nu)}_{0,Q-1} \\
0 & 1 & v^{(\nu)}_{1,2} & \cdots & v^{(\nu)}_{1,Q-1} \\
0 & 0 & 1 & \cdots & v^{(\nu)}_{2,Q-1} \\
\vdots & \vdots & \vdots & \ddots & \vdots \\
0 & 0 & 0 & \cdots & 1
\end{pmatrix},
\qquad
v^{(\nu)}_{j,k}\in\{0,1\},
\end{equation}
from which we can write explicitly the Sobol points binary coefficient defined in \cref{eq:sobol_linear_map}, for each $\nu$ dimension
\begin{equation}
\begin{split}
\label{eq:y_sobol_explicit}
&y_0^{(\nu)}(i) = a_0(i)\oplus v^{(\nu)}_{0,1}a_1(i)\oplus \cdots \oplus v^{(\nu)}_{0,Q-1}a_{Q-1}(i),\\
&y_1^{(\nu)}(i) = a_1(i)\oplus v^{(\nu)}_{1,2}a_2(i)\oplus \cdots \oplus v^{(\nu)}_{1,Q-1}a_{Q-1}(i),\\
&\ \vdots \\
&y_{Q-2}^{(\nu)}(i) = a_{Q-2}(i)\oplus v^{(\nu)}_{Q-2,Q-1}a_{Q-1}(i),\\
&y_{Q-1}^{(\nu)}(i) = a_{Q-1}(i),
\end{split}
\end{equation}
Since all operations in \cref{eq:y_sobol_explicit} are XORs over \(\mathbb{F}_2\), each non-zero off-diagonal entry \(v^{(\nu)}_{j,k}=1\) is implemented by a CNOT gate with control on the qubit encoding \(a_k(i)\) and target on the qubit encoding \(a_j(i)\).
\end{enumerate}
These two steps on $(d+1)Q$ qubits register perform  the following state preparation
\begin{equation}
\ket{0^Q}\ket{0^{dQ}} \xrightarrow{\text{Hadamards + Sobol}} \frac{1}{\sqrt{2^Q}}\sum_{i=0}^{2^Q-1} \ket{i} \bigotimes_{\nu = 1}^d \ket{\bar x_i^{(\nu)}},
\end{equation}
where $\bar x^{(\nu)}_i$ are the binary coefficients of Sobol points defined in \cref{eq:sobol_linear_map}. The two-qubit gate count for dimension \(\nu\) in the above construction is exactly
\begin{equation}
N_{\mathrm{2qubits,\ Sobol}}^{(\nu)} = \#\Bigl\{(j,k)\,:\,0\le j<k\le Q-1,\ v^{(\nu)}_{j,k}=1\Bigr\},
\end{equation}
hence, in the worst case
\begin{equation}
\label{eq:2q_sobol}
N_{\mathrm{2qubits,\ Sobol}}
\le \frac{Q(Q-1)}{2},
\end{equation}
with equality only in the fully dense upper-triangular case. Summing over all dimensions,
\begin{equation}
N_{\mathrm{2qubits,\ Sobol}} = \sum_{\nu=1}^{d} N_{\mathrm{2qubits,\ Sobol}} = O(dQ^2).
\end{equation}

As for the qubit count, storing all \(d\) coordinates with \(Q\) bits each requires \(dQ\) qubits.

\subsection{Full algorithm}

In our setting, QAE is applied to the  Sobol estimator $f_Q$ in \cref{eq:fQ}. 
We highlight that the same state-preparation unitary $\mathcal A_Q$ can also be used with QAE variants that avoid the full QPE circuit, such as Iterative Amplitude Estimation (IAE)~\cite{grinko2021iterative} or Maximum-Likelihood Amplitude Estimation (MLAE)~\cite{suzuki2020amplitude}, which we adopt as the alternative for testing our method (see  \cref {subsec:implemented_iae}). These alternatives modify only the final estimation routine applied to the Grover iterate $\mathcal Q$; the Sobol preparation and the function amplitude-loading blocks remain unchanged.
In  \cref{alg:qqmc_pipeline} we present the backbone pseudo-code of our proposed QqMC method. 

\begin{algorithm}[H]
\caption{Quantum quasi-Monte Carlo protocol}
\label{alg:qqmc_pipeline}
\begin{algorithmic}[1]
\Require Dimension $d$, resolution $Q$, Sobol generator matrices $V^{(1)},\dots,V^{(d)}$, oracle for function $f$ rescaled so that $0\le f\le 1$, and an amplitude-estimation routine.
\State Prepare the uniform superposition
\[
    \ket {\chi} \coloneq \frac{1}{\sqrt{2^Q}}\sum_{i=0}^{2^Q-1}\ket{i}.
\]
\For{$\nu=1,\dots,d$}
    \State Using CNOT gates over $\mathbb{F}_2$, apply the binary Sobol map (see  \cref{sec:quantum_sobol})
    \[
        \bar x^{(\nu)}_i=V^{(\nu)}\bar a_i.
    \]
    \State The result is the transformation of bitstring $i^{(\nu)}$ into the $Q$-bit Sobol coordinate 
    \[
       \ket{i} \mapsto \ket {\bar x_i^{(\nu)}}.
    \]
\EndFor
\State Prepare the quantum arithmetic circuit for $f(x_i)$ for use on the register storing the Sobol' points $x_i$ (description in  \cref{sec:func_loading}).
\State Amplitude encode $f(x_i)$ into the ancilla qubit:
\[
    \ket{\chi} \ket{\bar x_i}\ket{0}
    \mapsto
        \ket{\chi} \ket{\bar x_i}
    \left(
    \sqrt{1-f(x_i)}\ket{0}
    +
    \sqrt{f(x_i)}\ket{1}
    \right).
\]
\State Steps 1-7  define the state-preparation unitary $\mathcal A_Q$, whose ancilla $\ket 1$ probability is
\[
    f_Q=\frac{1}{2^Q}\sum_{i=0}^{2^Q-1} f(x_i).
\]
\State Apply QAE, IAE, or MLAE to the Grover iterate (description in  \cref{subsec:implemented_iae})
\[
    \mathcal Q=-\mathcal A_QS_0 \mathcal A_Q^\dagger S_1
\]
to estimate $f_Q$.
\Ensure An estimate $\hat f_Q$ of the Sobol quasi-Monte Carlo estimator.
\end{algorithmic}
\end{algorithm}

In the query-complexity model adopted in this work, one application of $\mathcal A_Q$ or $\mathcal A_Q^\dagger$ is counted as 
one query. The Sobol point generation and the amplitude-loading circuit contribute to the gate complexity related to each query. 

\paragraph{QqMC Error.} The total error with respect to the exact integral naturally decomposes into a deterministic Sobol discretization error and a quantum amplitude-estimation error:
\begin{equation}
\label{eq:qqmc_error_decomposition}
\left|\hat f_Q(T) - \int_{[0,1)^d} f(x)\,dx \right| \le \underbrace{\left| \hat f_Q(T) - f_Q \right|}_{\mbox{\small QAE Error}} + \underbrace{\left| f_Q - \int_{[0,1)^d} f(x)\,dx \right|}_{\mbox{\small Discretization Error}}.
\end{equation}
The first term is controlled by amplitude estimation, while the second term is the quasi-Monte Carlo integration error associated with the Sobol point set.

\subsection{Heuristic for the QqMC method}
\label{sec:quality_t}

As described in  \cref {paragraph:eb_qmc}, the quality of the resulting sequence depends on the choices of primitive polynomials and initial direction numbers through the parameter $t$. Different primitive polynomials and different initial values may lead to markedly different $t$ values.

If the $d$ coordinates are generated from primitive polynomials of degrees $s_1,\dots,s_d$ (with the standard convention $s_1=0$ when the first coordinate is generated via $m_j = 1$), then Sobol's construction yields
\begin{equation}
t\coloneq t_{\rm Sobol} \coloneq\sum_{\nu=1}^d s_\nu-d + 1,
\label{eq:sobol_t_parameter}
\end{equation}
where the notation $t$ hides the dependence on $d$ and the degrees. Hence, the choice of primitive polynomials directly controls the quality parameter $t$. Sobol' showed that $t$ grows faster than $d$, but not faster than $d\log d$~\cite{sobol_distribution_1967}. In modern implementations, the direction numbers are often optimized further. In particular, the Joe-Kuo tables~\cite{joe2008constructing,kuo_table} are widely used in high-dimensional applications. 

However, from \cref{eq:err_qmc_sob_pap}, if $t > q$, the bound for the quasi-Monte Carlo integration error is undefined. It is important to distinguish between the parameter $t$ of the infinite Sobol sequence and the \emph{finite}  quality parameter $t_{\rm finite}$ of the first $2^q$ points~\cite{dick2008exact}. The former reflects the asymptotic quality of the construction as $d$ increases, whereas the latter is the relevant parameter in the $(t,q,d)$-net error bound for a fixed sequence length $2^q$. It holds that 
\begin{equation}
    t_{\rm finite} \xrightarrow{q \rightarrow \infty} t.
\end{equation}
For fixed and finite $q$, $t_{\rm finite} \le q$, so it is bounded and should not be used to infer the asymptotic growth of $t_{\rm Sobol}$. Once the degrees $s_\nu$ are fixed, the exact $t$ value is known and given by \cref{eq:sobol_t_parameter}. For the $t_{\rm finite}$ sequence instead holds~\cite{niederreiter1992random},
\begin{equation}
\label{eq:tdq}
    t_{\rm finite} = q-\rho,
\end{equation}
where $\rho$ is the linear-independence parameter of the left upper $q\times q$ submatrices of the generating matrices $V$ in \cref{eq:sobol_linear_map}. Computing $\rho$ exactly is an exact combinatorial linear-algebra problem over $\mathbb{F}_2$ whose cost grows quickly with $q$ and $d$, and whose value also depends on the specific Sobol construction through the primitive polynomials, their ordering, and the chosen initial direction numbers~\cite{marion2020algorithm,joe2008constructing,kuo_table}. To our knowledge, no general closed-form expression for $t_{\rm finite}$ is currently available. 

In the present work, we compute  $t_{\rm finite}$ explicitly for selected dimensions $d$ using the Joe-Kuo direction numbers for Sobol sequences~\cite{joe2008constructing,kuo_table}. The computation is performed up to $q=25$ and $d=50$, and the resulting values are reported in \cref{app:fit_ansatz}. Motivated by these numerical data and by the theoretical trend discussed above, we consider the following assumption.

\begin{assumption}[Empirical model for $t_{\rm finite}$]
\label{ass:t_q_empirical}
The finite quality parameter $t_{\rm finite}$ satisfies the following qualitative properties:
\begin{itemize}
    \item For each fixed dimension $d$, $t_{\rm finite}$ converges as $q \rightarrow \infty$ to a limiting parameter $t$, where $t$ is the value associated with the specific Sobol construction in~\cite{kuo_table}, as described in \cref{eq:sobol_t_parameter}.
    \item For $d=1,2$, there exist Sobol' constructions such that $t_{\rm finite} = 0$ for all $q$. In dimension $d=3$, the optimal construction yelds to $t_{\rm finite}=0$ when $q=1$ and $t_{\rm finite}=1$ for all $q\geq 2$~\cite{dick2008exact}. The~\cite{joe2008constructing} are optimal and exhibit this behavior for the special cases $d=1,2,3$.
    \item The parameter $t_{\rm Sobol}$ grows at most as
    \begin{equation}
    \label{eq:dlogd}
        t_{\rm Sobol} = O(d\log_2 d).
    \end{equation}
    \item For every finite $q$, one has
    \begin{equation}
        t_{\rm finite} \le q.
    \end{equation}
\end{itemize}

Under these assumptions, we model $t_{\rm finite}$ through the following empirical ansatz,
\begin{equation}
\label{eq:t_q_approx}
   \tilde t \coloneq  \tilde t_{\rm finite} \coloneq
    \begin{cases}
        0, & d=1,2 \quad \forall q,\\
        0, & d=3,\ q=1,\\
        1, & d=3,\ q \ge 2,\\
        \left\lceil
        \min\left(q,\,  \bigl(1-e^{-\frac{\gamma q}{t_{\rm Sobol}}}\bigr)t_{\rm Sobol}\right)
        \right\rceil,
        & d \ge 4 \quad \forall q,
    \end{cases}
\end{equation}
where $\lceil \cdot \rceil$ denotes rounding up to the integer. The parameter $\gamma$ is a fitting parameter obtained from the values reported in  \cref{tab:tdq_values}; additional details on the fitting procedure are given in  \cref{app:fit_ansatz}.
\end{assumption}

We emphasize that  \cref{eq:t_q_approx} is a heuristic model that captures the computed tables fairly well. Its role is to provide a practical approximation of the quality parameter in regimes where explicit tabulation is unavailable, while preserving $\tilde t\rightarrow t_{\rm Sobol}$ and $ \tilde t\le q$. Recalling \cref{eq:err_qmc_sob_pap}, the rigorous quasi-Monte Carlo error bound turns into the empirical bound
\begin{equation}
\label{eq:final_err_qmc_sob}
    \tilde \epsilon_{qMC} \le \tilde g V(f), \quad \tilde g \coloneq \frac{1}{2^{q-\tilde t}} \sum_{i=0}^{d-1} \binom{q-\tilde t}{i},
\end{equation}
where  $t_{\rm finite}$ was replaced with the empirical ansatz $\tilde t$. 
The resulting error bound should be interpreted as a heuristic approximation.

\section{Analysis of Quantum quasi-Monte Carlo}
\label{sec:QqMC}

Our proposed method constructs a quantum circuit that coherently generates Sobol points and estimates the integral \cref{eq:integ} by means of Quantum Amplitude Estimation (QAE). An implication of \cref{eq:qqmc_error_decomposition} is as follows: as $T\to\infty$, only the amplitude-estimation contribution vanishes, while the discrepancy term associated with the fixed net of size $2^Q$ remains. In other words, for a fixed low-discrepancy net, our QqMC method cannot asymptotically achieve a better precision than classical qMC.
As discussed, a classical qMC estimator must explicitly evaluate the integrand at all  Sobol points. In contrast, the quantum circuit prepares the corresponding function values on the Sobol net coherently with  $O(1)$ oracle queries and then extracts the summation with around $T$ uses of that circuit. Hence, 
we want to identify a pre-asymptotic window of $(T,q,l)$ in which the upper bound of the proposed Quantum Quasi-Monte Carlo estimator is smaller than the corresponding upper bound of classical qMC. Here, $T$ denotes the number of Grover-operator queries used by QAE, while $2^q$ denotes the number of Sobol points used by the reference classical qMC estimator.

As we show below, the key point is that the resulting finite estimator $f_Q$ may be significantly more accurate than the classical estimator based on $2^q$ points. Moreover, the quantum method may use substantially fewer function queries than those required by classical qMC to evaluate the same number of Sobol points.
Recall that  
\begin{equation}
\label{eq:basic_qmc_notation}
Q=q+l,
\end{equation}
where $l\in\mathbb N$ denotes the number of additional Sobol qubits used in the quantum construction. 
We also recall $a(T)$, the QAE contribution  to the error defined in \cref{eq:aerr}.
Recall the Sobol qMC bound factor $\tilde g(q)$ in \cref{eq:final_err_qmc_sob} as a function of $q$.
Consider the set of parameters $(q,l,d,V(f))$. The subsequently defined quantities are a function of the parameters in this set. 
The comparison between the qMC and QqMC error upper bounds is controlled by
\begin{equation}
\label{eq:Deltaq_def}
\boxed{
\Delta\coloneq V(f)\bigl(\tilde g(q)-\tilde g(q+l)\bigr).}
\end{equation}
This quantity measures the improvement in the error bound obtained by moving from $2^q$ to $2^{q+l}$ Sobol points. When $\Delta>0$, solving $\Delta = a(T)$ for $T$ obtains, rounded up to the next integer (for a more formal proof refer to the \cref{th:Tthreshold}),
\begin{equation}
\label{eq:Tmin_def}
\boxed{T_{\min}
\coloneq 
\left \lceil\frac{2\pi}{\sqrt{1+4\Delta}-1}\right \rceil.}
\end{equation}

Thus, $T_{\min}$ is the smallest integer $T$ satisfying $a(T)<\Delta$. 
The QqMC error upper bound is strictly lower than the classical qMC error upper bound with fewer quantum than classical function queries whenever $T$ and $q$ are such that
\begin{equation}
\label{eq:Threshold_queries}
\Delta>0
\qquad\text{and}\qquad
T_{\min}\le T\le 2^q-1.
\end{equation}

In words, $\Delta>0$ determines if there is an advantage in the error bound, and one may call this the \textit{feasibility} of an advantage. 
In addition, the difference of the number of classical queries $2^q$ and the minimum number of quantum queries $T_{\min}$ measures how many fewer queries the quantum algorithms requires. One may call this the \textit{query-count gap}
\begin{equation}
\label{eq:G_def}
\boxed{G\coloneq 2^q-T_{\min}}
\end{equation}
which measures the advantage. The condition $G\ge1$ is equivalent to the condition that  $T<2^q$ satisfies \cref{eq:Threshold_queries}. We therefore study $\Delta$ first, and then analyze the dependence of $G$ on $V(f)$, $q$, $l$, and $d$.

\subsection{Entering the feasibility window}
\label{sec:feasibility_window_quantum_advantage}
Let $l >0$. The first question is for which regimes of the parameter $q$ we have $\Delta>0$. Although increasing the number of Sobol points from $2^q$ to $2^{q+l}$ should improve the discretization bound, the factor $\tilde g$ depends on the parameter $\tilde t$. The first $q$ where $\Delta>0$, defined as
\begin{equation}
\label{eq:qmin_def}
q_{\min}\coloneq \min\{q:\Delta>0\},
\end{equation}
can be written explicitly in terms of $l, \tilde t$ and $d$, as
\begin{equation}
\label{eq:qmin_exact}
q_{\min} = 
\begin{cases} 
m_d - l & \text{if } m_d - l \ge 1, \\
1 & \text{otherwise}.
\end{cases}
\end{equation}
where $m_d$ denotes the smallest integer solution of $m - \tilde t(m) = d$. The derivation of \cref{eq:qmin_exact} is given in \cref{th:Delta_sign} and \cref{cor:qmin}. 

Here, \cref{eq:qmin_exact} makes the dependence on $d$ and $l$ transparent. First, going from $l$ to $l+1$ lowers the $q_{\min}$ by exactly one, until the smallest possible value $q_{\min}=1$ is reached. Second, the dependence on $d$ is encoded in $\tilde t$. Given the asymptotic Sobol parameter trend in \cref{eq:dlogd} and the heuristic ansatz in \cref{eq:t_q_approx}, the heuristic scaling for $q_{\min}$ is (derivation in \cref{lemma:m_d})
\begin{equation}
\label{eq:qmin_asymptotic_d}
q_{\min} = \frac{1}{1-\gamma}d-l+O\left(\frac{d}{\log d}\right).
\end{equation}
Since $\gamma\approx0.77215$ (see \cref{app:fit_ansatz}), $\frac{1}{1-\gamma}\approx 4.39$, so a simple approximation is
\begin{equation}
\label{eq:qmindl}
q_{\min} \approx4.39d-l.
\end{equation}
The value $l\approx4.39d$ is not necessary for a quantum advantage in general. Such a value of $l$ would be needed to push the $q_{\min}$ down to very small values.

This reflects the fact that, in higher dimension, the effective Sobol level $q-\tilde t$ must become sufficiently large before the bound factor $\tilde g(q)$ starts to decrease strictly.

It is also essential to highlight that, because of the rounding integer in the ansatz \cref{eq:t_q_approx}, plateau may occur, and consequently the condition $q\ge q_{\min}$ does not imply that $\Delta>0$ for every value of $q > q_{\min}$. Strict positivity fails whenever the effective level has a plateau of length at least $l$. Thus $q_{\min}$ should be interpreted as the first feasible value, but  feasibility may not always hold for all larger $q$s. However, increasing $l$ slightly reduces the likelihood of observing unfeasible points where $q > q_{\min}$. Within the ansatz defined in \cref{eq:t_q_approx} using $l\geq2$ is sufficient to eliminate such infeasibility points.

\subsection{Exploring the feasibility window}
\label{sec:fewer_quantum_queries_window}

We now study the query-count gap in \cref{eq:G_def}. The condition $G\ge1$ means that the QqMC error upper bound can be made strictly smaller than the qMC error upper bound using fewer quantum queries than the $2^q$ classical function evaluations required by qMC. In the following sections, we illustrate how the gap varies with respect to the algorithm parameters (the Sobol net resolution $q$ and the number of additional Sobol qubits $l$) as well as the intrinsic function parameters (the Hardy-Krause variation $V(f)$ and the dimension $d$).

\paragraph{Algorithm parameters.}\hfill

\begin{enumerate}[i.]

\item \textbf{Classical Sobol level $\bm{q}$}: For fixed $d$, $l$, and $V(f)>0$, the exact dependence of the query-count gap on $q$ follows directly from \cref{eq:Tmin_def} and \cref{eq:G_def}. The asymptotic behavior of $G(q)$ is qualitatively different in one dimension and in higher dimensions. For $d=1$, the ansatz gives $\tilde t_1(q)=0$, hence
\begin{equation}
\label{eq:Delta_d1_asymptotic}
\Delta=V(f)2^{-q}(1-2^{-l}).
\end{equation}
Therefore the continuous threshold satisfies 
\begin{equation}
\label{eq:Tmin_d1_asymptotic}
T_{\min} \sim \frac{\pi}{V(f)(1-2^{-l})}2^q \qquad (q \to \infty)
\end{equation}
Thus, in one dimension as $q$ grows, $T_{\min}$ remains of the same exponential order as $2^q$.

By contrast, for every fixed $d\ge2$, the ansatz implies that $\tilde t$ eventually saturates, hence
\begin{equation}
\label{eq:Delta_dge2_asymptotic}
\Delta
\sim
V(f)2^{-q}q^{d-1} \qquad (q \to \infty),
\end{equation}
up to constants depending on $d$ and $l$. Consequently, as $q \rightarrow \infty$
\begin{equation}
\label{eq:Tmin_ratio_q}
\frac{T_{\min}(q)}{2^q}\longrightarrow0,
\qquad
G(q)=2^q-T_{\min}(q)\sim2^q.
\end{equation}
Thus, for every fixed $d\ge2$, the minimum quantum query count becomes asymptotically negligible compared with the classical qMC query count, once the feasible window is entered.

\cref{fig:gq} shows this trend for $d=4,10$ and $l=1,2$. It also highlights two important pre-asymptotic effects. First, infeasible points with $\Delta=0$ may occur even after the first feasible level because of the plateaus described in the previous \cref{sec:feasibility_window_quantum_advantage}. Second, the gap $G(q)$ can still be negative inside the feasible window if the required threshold $T_{\min}$ is larger than the classical query count $2^q$. However, once $q$ enters the genuinely advantageous regime, the separation between the classical query budget $2^q$ and the required quantum threshold $T_{\min}$ grows very rapidly. In particular, the gap enlarges essentially exponentially with $q$, because the classical qMC cost scales as $2^q$. This effect is already visible with only a small number of additional Sobol qubits, such as $l=1$ or $l=2$.

\begin{figure}
    \centering
    \includegraphics[width=1\linewidth]{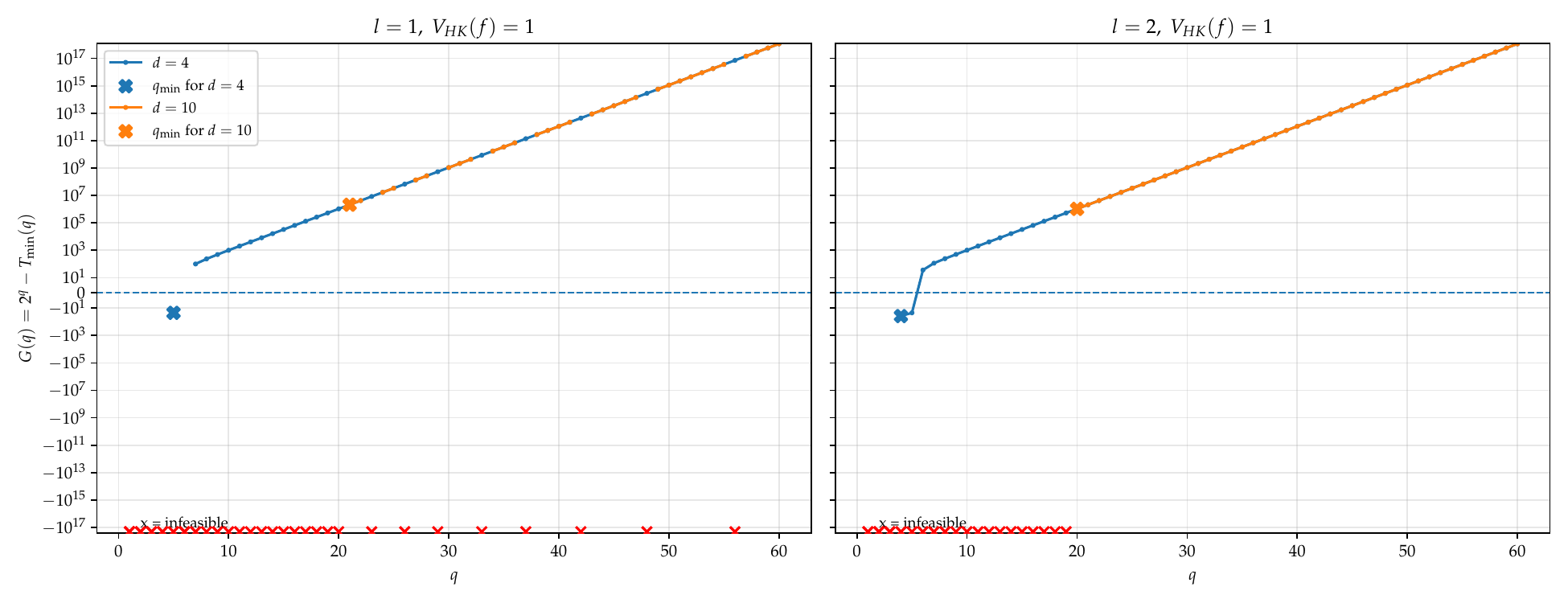}
    \caption{Gap $G(q)=2^q-T_{\min}$ as a function of $q$, for dimensions $d=4$ and $d=10$. The left panel corresponds to $l=1$, while the right panel corresponds to $l=2$. Infeasible points, namely values of $q$ for which $\Delta=0$, are highlighted in red. In the feasible regime, the gap grows rapidly with $q$, reflecting the exponential increase of the classical qMC query budget $2^q$ relative to the minimum number of quantum queries required by QqMC. This widening occurs already for a small number of additional Sobol qubits, $l=1$ or $l=2$. On the plotted scale, the exponential dependence on $q$ dominates the dimension-dependent corrections, so the curves for $d=4$ and $d=10$ become nearly indistinguishable once feasibility is reached.
    }
    \label{fig:gq}
\end{figure}

\item \textbf{Quantum Sobol level $\bm{Q}$}:
Recall that $Q=q+l$. Increasing $l$ increases the number of Sobol points $2^Q=2^{q+l}$ represented in the QqMC circuit. However, for fixed $d$, $q$, and $V(f)$, the improvement cannot continue indefinitely. Recall the definitions of $\Delta, T_{\text{min}}$ and $G$ above in \cref{eq:Deltaq_def,eq:Tmin_def,eq:G_def}. The properties for increasing $l$ are simple: $\Delta$ is nondecreasing, $T_{\min}$ is nonincreasing on the feasible window, and $G$ is nondecreasing. Moreover,
\begin{equation}
\label{eq:Deltal_limit}
\lim_{l\to \infty} \Delta=\tilde g
V(f),\qquad \lim_{l \to \infty} T_{\text{min}}
=
\left\lceil
\frac{2\pi}{\sqrt{1+4  \tilde g V(f)}-1}
\right\rceil,
\end{equation}
and the gap reaches the finite plateau $2^q-\lim_{l \to \infty} T_{\text{min}}$.
Thus, increasing $l$ is beneficial only up to a finite threshold. Beyond that point, the second term in $\Delta$ is too small to further change the integer value of $T_{\min}$, and the query-count gap remains constant. This behavior is shown in \cref{fig:gl}.

\begin{figure}
    \centering
    \includegraphics[width=0.6\linewidth]{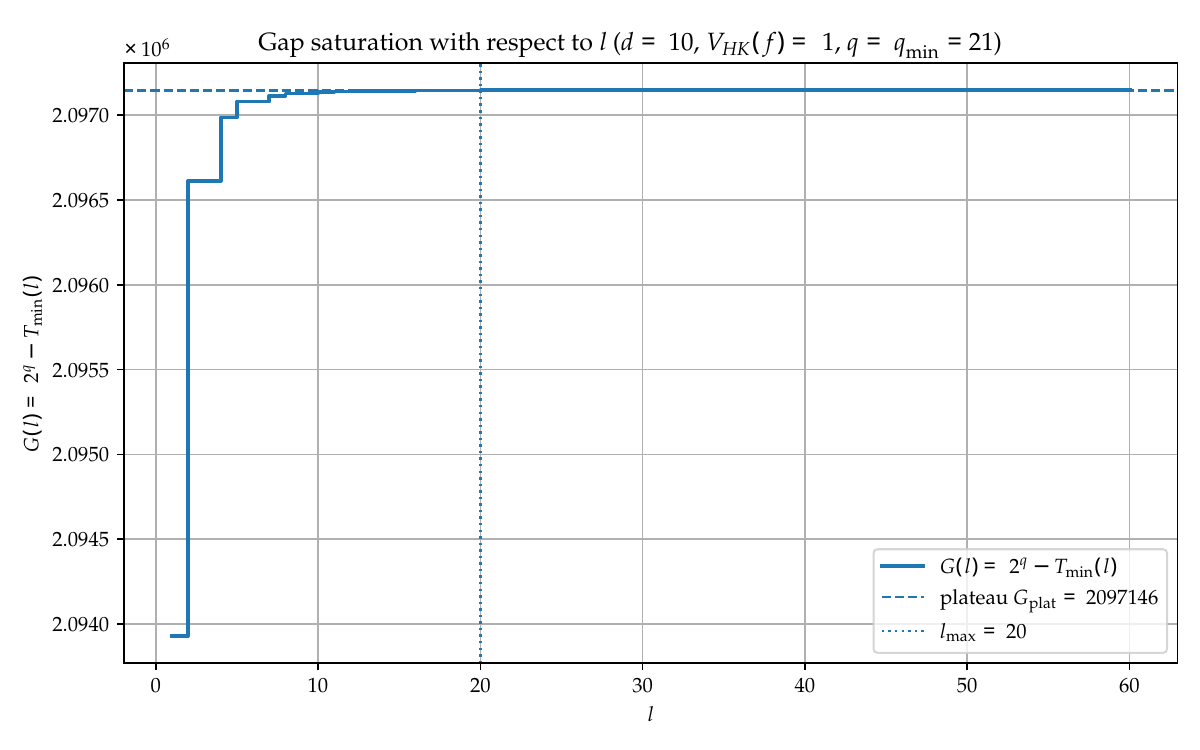}
    \caption{Gap $G(l)=2^q-T_{\min}(l)$ as a function of the number of additional qubits $l$, for fixed dimension $d=10$ and $q=q_{\min}=21$. This value of $q$ corresponds to the first strict-feasibility level when $l=1$. The plot shows that increasing $l$ initially enlarges the gap, thereby reducing the minimum number of quantum queries required to enter the advantageous regime. After a certain value of $l$, the curve reaches a clear plateau: adding more qubits further reduces the discretization contribution only marginally and no longer produces a significant increase in the query gap.
    }
    \label{fig:gl}
\end{figure}
\end{enumerate}

\paragraph{Intrinsic function parameters.}
In contrast to the previous section, in the current section we fix the algorithm parameters and investigate the dependence of the gap $G$ with respect to the function parameters (i.e. the part provided by the hypothetical user of the method). Note that we only consider the queries to the function in the quantum algorithm and not the full cost of implementing the function on the quantum computer.

\begin{enumerate}[i.]
\item \textbf{Hardy-Krause variation $\bm{V(f)}$}: From \cref{eq:Deltaq_def} we see a linear dependency on the Hardy-Krause variation $V(f)$. From this expression, we see that \cref{eq:Tmin_def} gives an inverse square root dependence. Therefore
$G(V(f))$, from \cref{eq:G_def} is a nondecreasing step function of $V(f)$. Since the integer threshold cannot go below $1$, one eventually obtains
\begin{equation}
\label{eq:TminV_integer_saturation}
\lim_{V \to \infty} T_{\min}(V)=1,
\qquad
\lim_{V \to \infty} G(V)=2^q-1.
\end{equation}
Therefore, within this upper-bound comparison, larger Hardy-Krause variation is favorable for QqMC: it widens the regime in which the QqMC method can match the classical qMC bound while using fewer queries. \cref{fig:gV} illustrates this behavior for fixed $q$ and $l$ and two different dimensions. The gap increases with $V(f)$ and eventually saturates at the maximal possible value $2^q-1$.

\begin{figure}
    \centering
    \includegraphics[width=1\linewidth]{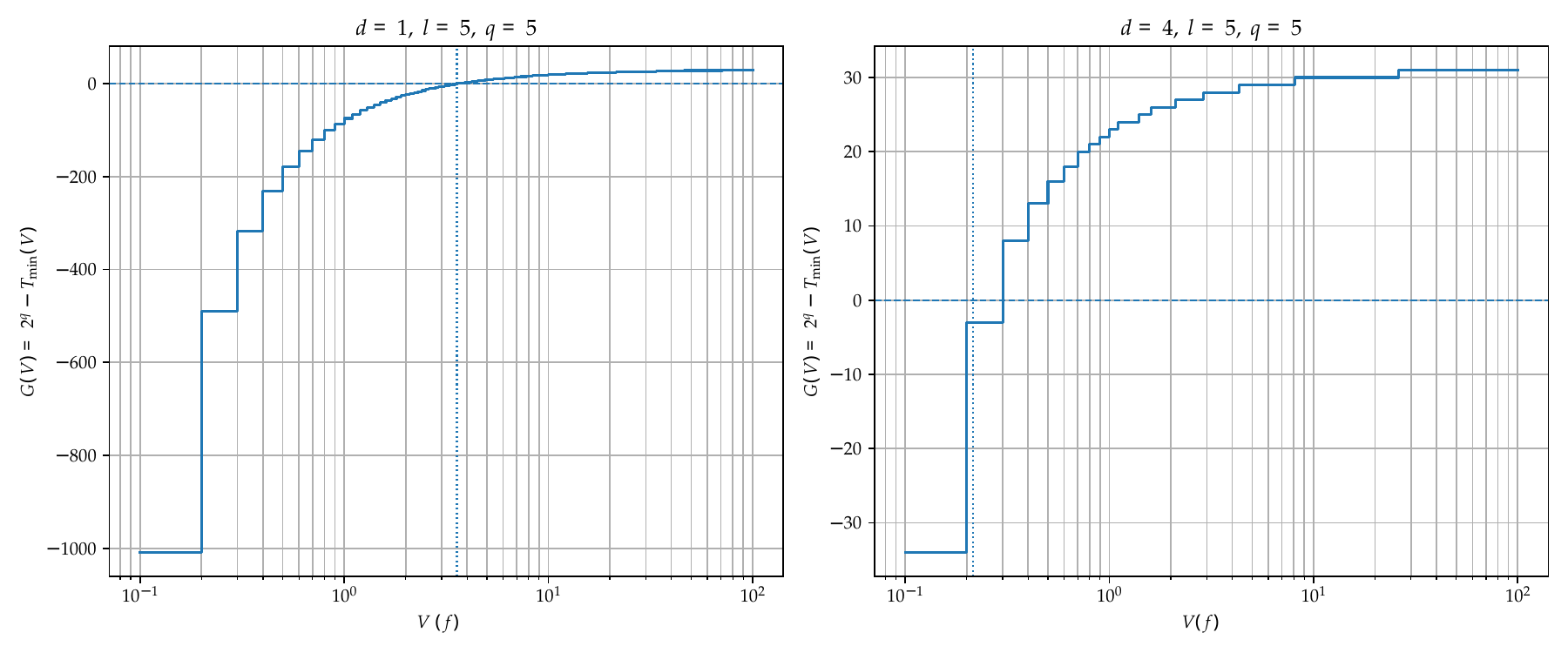}
    \caption{Gap $G(V)=2^q-T_{\min}(V)$ as a function of the Hardy-Krause variation $V(f)$, for fixed $l=5$ and $q=5$. The left panel corresponds to $d=1$, while the right panel corresponds to $d=4$. In both cases, the gap increases with $V(f)$, showing that larger variation enlarges the range of query counts for which the QqMC upper bound can remain below the classical qMC upper bound. The larger gap observed in the higher-dimensional case (right panel) suggests that, within this pre-asymptotic upper-bound comparison, achieving a quantum query advantage becomes easier as the dimension increases.
    }
    \label{fig:gV}
\end{figure}

\item \textbf{Dimension $\bm{d}$}: Studying the gap exactly as a function of $d$ would require a closed-form expression for the construction-dependent parameter $t_{\rm Sobol}$ entering the ansatz for $\tilde t$. As discussed in \cref{sec:quality_t}, such a closed form is not known in general. Therefore, the dimension-dependent conclusions below should be interpreted as consequences of the empirical ansatz in \cref{ass:t_q_empirical}, together with the scaling $t_{\rm{Sobol}}=O(d\log d)$. 

Fix $q$, $l$, and $V(f)$.
From the Joe-Kuo sequence~\cite{joe2008constructing}, we observe that the trend of $t_{\rm Sobol}$ closely follows the asymptotic scaling $d \log_2 d$. Hence, we 
simply substitute   $t_{\rm Sobol} = d \log_2 d$ into~\cref{eq:t_q_approx}.
As a consequence, $\Delta(d)$ exhibits a bell-shaped profile: it rises to a maximum at $d^\ast$ and subsequently decreases until entering the infeasible region where $\Delta(d) = 0$ at $d = d_{\max}$. 

Consequently, according to \cref{eq:Tmin_def}, $T_\text{min}(d)$ decreases monotonically up to $ d^\ast$ and then increases again until reaching the infeasible boundary at $d_{\max}$. However, for sufficiently large $q$ (provided the feasible window does not vanish), the gap between classical and quantum query complexities (the latter defined as $G(d) = 2^q - T_\text{min}(d))$ remains significant throughout the feasible interval $d \in [d_{\min}, d_{\max})$. Here, $d_{\min}$ denotes the smallest dimension satisfying the condition $T_\text{min}(d) < 2^q$. This behavior is well depicted in~\cref{fig:gd}.

\begin{figure}
    \centering
    \includegraphics[width=1\linewidth]{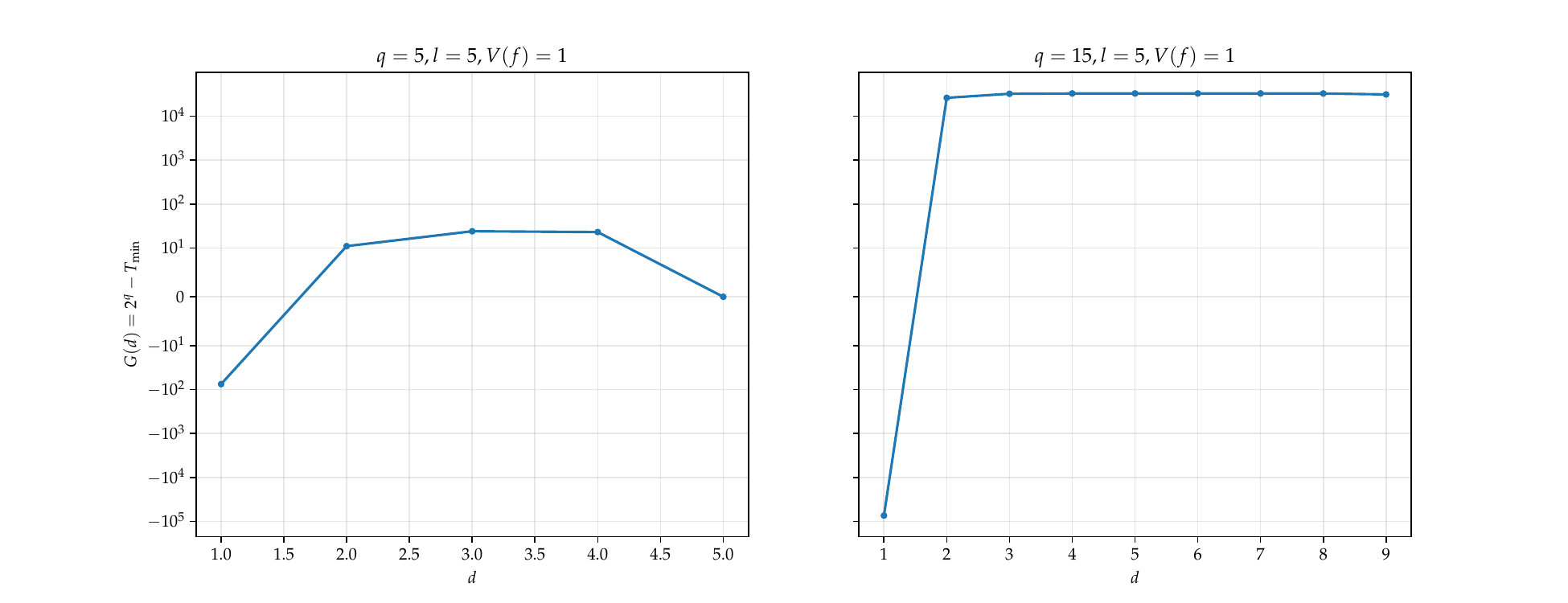}
    \caption{The gap $G(d)=2^q-T_{\min}(d)$ is plotted as a function of the dimension $d$ for fixed $l=5$ and $V(f) = 1$, with $q=5$ shown in the left panel and $q=15$ in the right. In both cases, the gap exhibits a bell-shaped trend. However, because the classical query count remains independent of $d$, increasing $q$ makes the advantage of Quantum QqMC over classical qMC significantly visible across all dimensions $d$ within the feasible window.}
    \label{fig:gd}
\end{figure}

\end{enumerate}

\section{Implementation and Numerics}
\label{sec:implementation_and_numerics}

In this section, we test the proposed Quantum-quasi Monte Carlo method on a specific class of linear integrands of the form
\begin{equation}
\label{eq:convex_linear_integrand}
f(x)=\sum_{\nu=1}^{d} w^{(\nu)} x^{(\nu)},
\qquad 
x=(x^{(1)},\dots,x^{(d)})\in [0,1)^d,
\end{equation}
where the coefficients satisfy \(w^{(\nu)}\geq 0\). In the numerical experiments considered below, we choose the weights so that \(f(x)\in[0,1)\). This class is simple enough to implement with modest overhead on a quantum simulator and illustrates the architecture of the proposed Quantum-quasi Monte Carlo method.
The goal of this section is twofold.
\begin{enumerate}
    \item First, we describe how the class of functions in \cref{eq:convex_linear_integrand} can be loaded into a quantum circuit. More precisely, given a set of \(2^Q\) Sobol points, we construct a quantum oracle whose flag-qubit amplitude encodes the estimator \(f_Q\) introduced in \cref{eq:fQ}. The loading procedure is designed so that the whole Sobol-based estimator is embedded coherently in the quantum circuit, and the resulting oracle call is counted as a single function query. Therefore, in the query model considered in this work, the circuit preparation of \(f_Q\) contributes \(O(1)\) function queries.
    \item Second, we discuss the amplitude-estimation routine used to estimate the encoded expectation value. Implementing exact Quantum Amplitude Estimation based on Quantum Phase Estimation would be computationally demanding in our setting, mainly because of the repeated controlled applications of the Grover operator and the additional overhead associated with the Quantum Fourier Transform. For this reason, inspired by the Maximum Likelihood Amplitude Estimation (MLAE)~\cite{suzuki2020amplitude,grinko2021iterative}, we implement a QPE-free version of the original QAE. We then specify how function queries are counted in this iterative protocol. In particular, we separate the intrinsic query cost associated with amplitude amplification from the additional overhead introduced by the iterative search procedure, so that the query count can be compared consistently with the one expected from the original QAE framework.
\end{enumerate}

\subsection{Loading the integrand and quantum estimator with Sobol' points}
\label{sec:func_loading}

Before entering the technical details of each block, it is useful to summarize the logic of the full circuit. The circuit acts on five groups of registers:
\begin{itemize}
\item \((d+1)Q\) qubits storing the binary coefficients of Sobol points for the \(d\) coordinates. The preparation of this register has been described in \cref{sec:quantum_sobol};
\item an \(S\)-qubit accumulator register \(\Sigma\), used to store the fixed-point approximation of \(2^S f(x_i)\);
\item an \(S\)-qubit register \(r\), later used to prepare a uniform random integer in \(\{0,\dots,2^S-1\}\), necessary for the comparator with $\Sigma$;
\item an ancillary equality register of size \(S+1\), required by the reversible comparator;
\item one flag qubit, which will encode the Bernoulli outcome associated with \(f(x_i)\).
\end{itemize}

Starting from the all-zeros state, the first step prepares a uniform superposition over the integer labels \(i\in\{0,\dots,2^Q-1\}\) and then applies the Sobol mapping for all dimensions (\cref{sec:quantum_sobol}). This step is given by
\begin{equation}
\begin{split}
& \ket{0^Q} \ket{0^{dQ}}\ket{0^S}_{\Sigma}\ket{0^S}_{r}\ket{0^{S+1}}_{\mathrm{eq}}\ket{0}_{\mathrm{flag}}
\;\xrightarrow{\;\text{Hadamards + Sobol}\;} \\
&  \left( \frac{1}{\sqrt{2^Q}}\sum_{i=0}^{2^Q-1} \ket{i} \bigotimes_{\nu = 1}^d \ket{\bar x^{(\nu)}_i} \right)  \ket{0^S}_{\Sigma}\ket{0^S}_{r}\ket{0^{S+1}}_{\mathrm{eq}}\ket{0}_{\mathrm{flag}}.
\end{split}
\end{equation}
Here \(\bar x^{(\nu)}_i\) denotes Sobol binary coefficients associated with the \(i\)th point and $\nu$ dimension (see \cref{sec:quantum_sobol}). For simplicity, we can define naturally
\begin{equation}
\bigotimes_{\nu = 1}^d \ket{\bar x^{(\nu)}_i}  =: \ket{\bar x_i}.
\end{equation}

The second step evaluates the integrand in fixed-point encoding. More precisely, the controlled additions map the Sobol registers into an \(S\)-qubit approximation 
\(\widetilde f(x_i)\approx 2^S f(x_i)\), stored in the adder register:
\begin{equation}
\label{eq:contr_adder}
\begin{split}
& \ket{i}\ket{\bar x_i}\ket{0^S}_{\Sigma}\ket{0^S}_{r}\ket{0^{S+1}}_{\mathrm{eq}}\ket{0}_{\mathrm{flag}}
\;\xmapsto{\;\text{controlled adder}\;} \\
& \ket{i} \ket{\bar x_i}\ket{\widetilde f(x_i)}_{\Sigma}\ket{0^S}_{r}\ket{0^{S+1}}_{\mathrm{eq}}\ket{0}_{\mathrm{flag}}.
\end{split}
\end{equation}

In the next step, Hadamard gates are applied to the register \(r\), producing a uniform superposition over all integers \(r\in\{0,\dots,2^S-1\}\). A reversible comparator then checks whether \(r<\widetilde f(x_i)\) and flips the flag qubit if and only if this condition is satisfied:
\begin{equation}
\label{eq:comparator}
\begin{split}
&\ket{i} \ket{\bar x_i}\ket{\widetilde f(x_i)}_{\Sigma}\ket{0^S}_{r}\ket{0^{S+1}}_{\mathrm{eq}}\ket{0}_{\mathrm{flag}}
\;\xmapsto{\;H^{\otimes S}\text{ on }r\;+\;\text{comparator}\;} \\
& \ket{i}\ket{\bar x_i}\ket{\widetilde f(x_i)}_{\Sigma}\frac{1}{\sqrt{2^S}}\sum_{r=0}^{2^S-1}
\ket{r}\ket{0^{S+1}}_{\mathrm{eq}}|[r<\widetilde f(x_i)]\rangle_{\mathrm{flag}}\\
&=
\ket{i}\ket{\bar x_i}\ket{\widetilde f(x_i)}_{\Sigma}\ket{0^{S+1}}_{\mathrm{eq}} \frac{1}{\sqrt{2^S}}\left(\sum_{r: \,r<\widetilde f(x_i)}\ket{r}\ket{1}_{\mathrm{flag}} + \sum_{r: \,r\geq\widetilde f(x_i)}\ket{r}\ket{0}_{\mathrm{flag}}\right)=: \ket{\ref{eq:comparator}}.
\end{split}
\end{equation}

The second step converts the fixed-point value stored in the accumulator into an amplitude encoding. Indeed, the register \(r\) is uniformly distributed over \(2^S\) integers, and therefore the flag is equal to \(1\) for exactly \(\widetilde f(x_i)\) of them, in superposition with the points $x_i$. 
Note that, 
\begin{equation}
\label{eq:res_comp}
|\langle i, \bar x_i,1_{\text{flag}}|\ref{eq:comparator} \rangle |^2=\frac{\widetilde S(x_i)}{2^S}\approx f(x_i).
\end{equation}

In the next sections, we describe step by step how to build a quantum circuit that evaluates \(f(x)\) on Sobol points in superposition. We also keep track of the corresponding overhead in terms of qubit count and two-qubit gate count.

\paragraph{Weighted-sum loading.} We now encode the quantity \(2^S f(x_i)\) into an \(S\)-qubit accumulator register. Let
\begin{equation}
i_\nu=\sum_{k=0}^{Q-1} y_k^{(\nu)}(i)\,2^{Q-1-k},
\qquad
x^{(\nu)}_i=\frac{i^{(\nu)}}{2^Q},
\end{equation}
where \(y_k^{(\nu)}(i)\) are the binary coefficients of the $i$th Sobol integer in dimension \(\nu\), the ones specified in \cref{eq:sobol_linear_map} and \cref{eq:y_sobol_explicit}. Then
\begin{equation}
2^S f(x_i) = 2^S \sum_{\nu=1}^{d} w^{(\nu)} x^{(\nu)}_i = 2^{S-Q}\sum_{\nu=1}^{d} w^{(\nu)} y^{(\nu)}(i) = \sum_{\nu=1}^{d}\sum_{k=0}^{Q-1}y_k^{(\nu)}(i)\,w^{(\nu)}\,2^{S-1-k}.
\end{equation}
In the present proof-of-principle implementation, the modular addition
\begin{equation}
\ket{s} \mapsto \ket{(s+c) \bmod{2^S}}
\end{equation}
is realized as an exact lookup-table permutation unitary acting on the full accumulator register, implemented as a controlled multi-qubit unitary~\cite{pennylane_controlledqubitunitary}. More scalable alternatives include structured ripple-carry and optimized fault-tolerant adders~\cite{vedral1996quantum,cuccaro2004new,gidney2018halving}.
Since \(\sum_{\nu=1}^{d} w^{(\nu)} =1\) and \(x^{(\nu)}_i\in[0,1)\), the exact value \(2^S f(x_i)\) lies in \([0,2^S]\), so in the present setting no carry beyond the \(S\)-qubit accumulator is required. Because the adder takes an integer constant as input, each real coefficient \(w^{(\nu)} 2^{S-1-k}\) must be approximated by an integer. In the current implementation, this is done by
\begin{equation}
\label{eq:ck}
c_k^{(\nu)} = \lfloor w^{(\nu)} 2^{S-1-k}\rceil,
\end{equation}
where $\lfloor \cdot \rceil$ indicates the nearest integer. So the circuit loads the fixed-point approximation
\begin{equation}
\label{eq:tilde_S}
\widetilde S(\bar x_i) = \sum_{\nu=1}^{d}\sum_{k=0}^{Q-1} y_k^{(\nu)}(i)\,c_k^{(\nu)}.
\end{equation}
Consequently,
\begin{equation}
\frac{\widetilde S(\bar x_i)}{2^S}\approx f(x_i).
\end{equation}

Hence the number of controlled additions is at most
\begin{equation}
N_{\mathrm{add}}^{(\mathrm{macro})} \le dQ,
\end{equation}
and can be smaller in practice because some coefficients \(c_k^{(\nu)}\) may be zero.

\paragraph{Comparator.} Once the fixed-point accumulator \(\widetilde S(\bar x_i)\) has been prepared, the final step is to convert it into a Bernoulli probability. For this, we prepare a \(S\)-qubit register \(r\) in a uniform superposition over \(\{0,\dots,2^S-1\}\), and compare it with the value \(\widetilde S(\bar x_i)\).

The comparator is implemented as a reversible bitwise comparison circuit and flips the flag qubit if and only if
\begin{equation}
r<\widetilde S(x_i).
\end{equation}

At the logical level, the comparator used here contains
\begin{equation}
N_{\mathrm{cmp}}^{(\mathrm{macro})}=5S
\end{equation}
multi-controlled NOT blocks, together with \(O(S)\) additional single-qubit \(X\) gates. The comparator construction used here employs an equality ancilla chain of length $S+1$. These ancillas store the prefix-equality conditions between the two registers being compared, which are needed to determine whether lower-order bits can affect the comparison result. Once the comparison flag has been written, these temporary bits are uncomputed and restored to their initial state. We therefore count the comparator cost in two parts: the temporary ancilla qubits required by this comparator design, and the number of logical multi-controlled blocks, which can then be decomposed into elementary one- and two-qubit gates according to the chosen compilation scheme.

\paragraph{Overall resource count.} The main registers are:
\begin{itemize}
\item \((d+1)Q\) qubits for the \(d\) Sobol integers and indices;
\item \(S\) qubits for the accumulator register;
\item \(S\) qubits for the random threshold register;
\item \(S+1\) ancilla qubits for the reversible comparator;
\item \(1\) flag qubit.
\end{itemize}
Hence the total qubit count is
\begin{equation}
N_{\mathrm{qubits}}
=
(d+1)Q + S + S + (S+1) + 1
=
(d+1)Q + 3S + 2.
\end{equation}

For the two-qubit gate count, it is convenient to separate the three blocks:
\begin{equation}
N_{2\mathrm{qubits}} = N_{\mathrm{Sobol}} + N_{\mathrm{add}}^{(2\mathrm{q})} + N_{\mathrm{cmp}}^{(2\mathrm{q})}.
\end{equation}
The Sobol register contributes as in \cref{eq:2q_sobol}. For the weighted-sum block, the number of logical controlled additions satisfies
\begin{equation}
N_{\mathrm{add}}^{(\mathrm{macro})} \le dQ,
\end{equation}
but the corresponding elementary two-qubit cost depends on how the controlled adder is decomposed. Likewise, for the comparator,
\begin{equation}
N_{\mathrm{cmp}}^{(\mathrm{macro})}=5S,
\end{equation}
but each multi-controlled NOT must be further decomposed into elementary gates. Therefore, the exact two-qubit count of the arithmetic and comparison layers is implementation dependent, whereas the logical-block count reported above is fixed by the circuit structure.

In the present proof-of-principle implementation, the fixed-point accumulation is the dominant block because each controlled addition is inserted as a full controlled unitary on the accumulator register. For small \(S\) and \(Q\), this is perfectly adequate for simulation and for demonstrating the feasibility of the quantum quasi-Monte Carlo method. For larger \(S\) and \(Q\), however, the arithmetic block should be replaced by a structured quantum adder in order to recover a polynomial gate scaling.

\subsection{Implemented amplitude estimation routine}
\label{subsec:implemented_iae}

We now describe the amplitude estimation routine used in the numerical experiments. The goal is to estimate an amplitude defined in \cref{eq:fQ} prepared by a unitary operator \(\mathcal A_Q\) defined in \cref{eq:AQ_state}. In our implementation, \(\mathcal A_Q\) is the Sobol plus function loading circuit described in the previous sections. For the linear integrands considered here, the flag amplitude encodes the normalized Sobol estimator. If \(c_w=\sum_{\nu=1}^d w^{(\nu)}\), then the circuit estimates
\begin{equation}
a_Q = \frac{1}{c_w} \frac{1}{2^Q} \sum_{i=1}^{2^Q} f(x_i),
\end{equation}
and the final estimate of the integral is recovered as
\begin{equation}
\hat f_Q = c_w  \hat a_Q.
\end{equation}
After applying \(k\) Grover iterates defined in \cref{eq:grover_qae}, the probability
of measuring the flag qubit in state \(\ket{1}\) is
\begin{equation}
p_k(\theta) = \sin^2\left((2k+1)\theta\right).
\label{eq:pk_theta_iae_impl}
\end{equation}

Instead of implementing the original Quantum Amplitude Estimation algorithm
based on Quantum Phase Estimation, which would require controlled powers of
\(\mathcal Q\) and an inverse Quantum Fourier Transform, we use a simplified
iterative routine inspired by QPE-free amplitude-estimation methods~\cite{suzuki2020amplitude,grinko2021iterative}. More
precisely, for a fixed schedule of Grover powers
$\mathcal K=\{k_1,\dots,k_m\}$,
we run the circuits
\begin{equation}
\mathcal Q^{k_j}\mathcal A_Q\ket{0}, \qquad j=1,\dots,m,
\end{equation}
and estimate the corresponding flag probabilities
\begin{equation}
\hat p_{k_j} = \frac{n_{k_j}}{N_{\mathrm{shots}}},
\end{equation}
where \(n_{k_j}\) is the number of observed flag outcomes equal to \(1\).
The angle \(\theta\) is estimated by a simple weighted least-squares fit. Given the domain $\Theta= [0,\pi/2]$, define the loss function
\begin{equation}
\label{eq:loss}
\mathcal L(\theta) = \sum_{k_j\in\mathcal K} (2k_j+1)^2 \left[ \hat p_{k_j} - \sin^2\left((2k_j+1)\theta\right) \right]^2,
\end{equation}
where the factor $(2k_j+1)^2$ reflects the quadratic growth of the Fisher information with the Grover amplification factor~\cite{suzuki2020amplitude}. 

The estimated angle is $\hat\theta := \arg\min_{\theta\in\Theta} \mathcal L(\theta)$, and the estimated amplitude is
\begin{equation}
\hat a = \sin^2(\hat \theta).
\end{equation}

\paragraph{Query count convention.} There are several possible query-count conventions for this implementation. A circuit with \(k\) Grover iterates contains one initial application of \(\mathcal A_Q\), followed by \(k\) applications of \(\mathcal Q\). Since each Grover iterate contains one application of \(\mathcal A_Q\) and one application of \(\mathcal A_Q^\dagger\), the corresponding number of calls to the state preparation oracle is $2k+1$. Therefore, for a schedule \(\mathcal K\), the cost is $T_{\mathcal A_Q} = \sum_{k\in\mathcal K}(2k+1)$. If one also counts $N_{\mathrm{shots}}$-times repeated circuit executions, the cost is $T_{\mathrm{phys}} = N_{\mathrm{shots}}
\sum_{k\in\mathcal K}(2k+1)$. In the numerical comparisons, however, we report the effective cost
\begin{equation}
T_{\mathrm{eff}} = \sum_{k\in\mathcal K} k,
\end{equation}
which matches the convention used in the implementation. This convention removes the classical sampling overhead due to the iterative fitting procedure, and allows a direct comparison with the query parameter appearing in the ideal QAE bound. For the geometric schedule, as the one proposed in~\cite{suzuki2020amplitude,grinko2021iterative} $\mathcal K_r=\{1,2,4,\dots,2^r\}$, with an integer $r$, we have
\begin{equation}
T_{\mathrm{eff}} = \sum_{s=0}^{r}2^s = 2\times 2^r-1.
\end{equation}
Thus \(K:=2^r=\Theta(T_{\mathrm{eff}})\), where $K$ controls the asymptotic resolution in the same way as $T$ in QAE, up to a constant factor.

\paragraph{Expected scaling and limitations.} The original QAE algorithm satisfies the bound given in \cref{eq:aerr}, which here is the benchmark against which the empirical performance of the implemented estimator is compared.
The estimator implemented here is based on \cref{eq:pk_theta_iae_impl}. 
The resolution of \(\theta\) is controlled by the largest Grover power \(K\). In this regime, one expects
\begin{equation}
|\hat \theta-\theta| = O\left(\frac{1}{K}\right),
\end{equation}
and since
\begin{equation}
\left|\sin^2(\hat \theta)-\sin^2(\theta) \right| \leq |\hat \theta-\theta|,
\end{equation}
we obtain the heuristic scaling
\begin{equation}
|\hat a-a| = O\left(\frac{1}{K}\right) = O\left(\frac{1}{T_{\mathrm{eff}}}\right).
\end{equation}
Due to the finite shots, for a fixed
\(k\), we have an additional contribution to the empirical probabilities of 
\begin{equation}
\mathrm{Var}(\hat p_k) = \frac{p_k(1-p_k)}{N_{\mathrm{shots}}} \leq \frac{1}{4N_{\mathrm{shots}}},
\end{equation}
so we have with constant probability that
\begin{equation}
|\hat p_k-p_k| = O\left(\frac{1}{\sqrt{N_{\mathrm{shots}}}}\right).
\end{equation}
After local linearization of \(p_k(\theta)\), this induces an angular uncertainty
of order
\begin{equation}
|\hat \theta-\theta| = O\left(\frac{1}{(2k+1)\sqrt{N_{\mathrm{shots}}}} \right),
\end{equation}
provided the derivative of \(p_k(\theta)\) is not close to zero and the correct branch has been identified. This is why the finite-shot error is controlled in the experiments by using a sufficiently large value of \(N_{\mathrm{shots}}\), while the query count reported in the plots is the effective amplification budget \(T_{\mathrm{eff}}\), not the total number of sampled circuit executions.

\subsection{Numerical observation of pre-asymptotic advantage}

We now report a small set of numerical experiments comparing the classical qMC estimator with the proposed QqMC estimator. These experiments are not meant to provide a complete scaling analysis. Due to the high computational cost of classically simulating the quantum circuit, at this stage we have collected data only for small problem sizes: up to $Q=6$ in one dimension and up to $Q=4$ in two dimensions. Nevertheless, the results already show a consistent and encouraging pre-asymptotic pattern.

\paragraph{Implementation and simulation settings.} For both the one- and two-dimensional experiments, we implement the Sobol-point construction coherently in the quantum circuit and encode the corresponding function value into the amplitude of a flag qubit. In the current implementation, the function-loading block relies on a fixed-point integer approximation. In particular, we allocate $S=5$ bits of integer precision for the fixed-point representation with $d=1$ and $S=4$ in $d=2$. This choice is kept fixed across the reported experiments in order to make the comparison between the two settings as uniform as possible.

In these tests, the classical qMC estimator uses $N=2^q$ Sobol points, while the QqMC estimator uses a finer Sobol discretization with $2^Q$, where $Q=q+l$. Thus, increasing $l$ corresponds to increasing the resolution of the Sobol grid encoded in the quantum state.

The amplitude is estimated via the least-squares procedure described in \cref{subsec:implemented_iae}. To minimize the weighted loss defined in \cref{eq:loss}, we discretize the interval $[0, \pi/2]$ into $N_{\mathrm{grid}}$ points and select the value that yields the minimum loss. Additionally, we specify $N_{\mathrm{shots}}$ shots to estimate the empirical probabilities. All primary implementation parameters are summarized in \cref{tab:numerical_settings}. For the one-dimensional experiment, a uniform weight $w=1.0$ is used, whereas the two-dimensional experiment employs a convex weighting function with weights $w^{(1,2)} = (0.9, 0.1)$.

\begin{table}[t]
\centering
\begin{tabular}{c c c c c c}
\hline
Dimension $d$ & Weights & Maximum $Q$ & Fixed-point precision $S$ & $N_{\mathrm{grid}}$ Loss minimization & Shots \\
\hline
$1$ & $1.0$ & $6$ & $5$ & $2\times10^5$ & $2048$ \\
$2$ & $(0.9,0.1)$ & $4$ & $4$ & $2\times10^5$ & $2048$ \\
\hline
\end{tabular}
\caption{Summary of the numerical settings used in the preliminary simulations. The QqMC estimator uses a Sobol discretization of size $M=2^Q$, with $Q=q+l$, while the classical qMC baseline uses $N=2^q$ Sobol points. In both dimensions, the fixed-point integer precision used in the function-loading circuit is fixed to $S=5$ and the number of grid points for loss minimization are $N_{\mathrm{grid}} = 2\times10^5$. Number of shots in the circuit is fixed to $N_\mathrm{shots} = 2048$.}
\label{tab:numerical_settings}
\end{table}

All simulations are performed on a classical quantum simulator using PennyLane version $0.43.1$ with the \texttt{default.qubit} backend and Python version $3.12.3$. The simulations are run on CPU, using a 13th Gen Intel\textsuperscript{\textregistered} Core\texttrademark{} i9-13900H processor. The reported experiments are classical simulations of the proposed quantum circuit. The resulting numerical error contains three main sources of approximation: the finite Sobol discretization, the fixed-point integer representation used inside the arithmetic block, and the statistical error associated with the finite-shot amplitude-estimation routine.

The error is measured through the relative error
\begin{equation}
\label{eq:relative_error_numerics}
\varepsilon_{\mathrm{alg}}
=
\frac{
\left|\hat f_{\mathrm{alg}}-\tilde f\right|
}{
|\tilde f|
},
\qquad
\mathrm{alg}\in\{\mathrm{QqMC},\mathrm{qMC}\},
\end{equation}
where $\hat f_{\mathrm{alg}}$ denotes the estimator returned by either QqMC or classical qMC, and $\tilde f$ denotes the reference value of the integral.

\paragraph{Observed pre-asymptotic advantage.} The main observation is that as $l$ increases, the QqMC estimator achieves a lower relative error while requiring fewer Grover-operator queries than the number of classical qMC points. In other words, even in this small pre-asymptotic regime, the quantum estimator benefits from the finer Sobol discretization encoded in the amplitude, while the amplitude-estimation step is still performed with a query budget below the classical sample size.

The observed relative errors in one and two dimensions are very similar. This suggests that, at least for the small values of $Q$ currently accessible in simulation, the dominant effect is the increase in Sobol resolution. A more reliable assessment of the dimensional dependence would require simulations at larger $d$, $q$, and $Q$. This trend is illustrated in \cref{fig:1D_QqMC}. The left panel reports the one-dimensional case with $w=1.0$, while the right panel reports the two-dimensional case with weights $w^{(1,2)}=(0.9,0.1)$. In both cases, increasing the number of encoded Sobol points improves the QqMC relative error, and the quantum estimator can outperform the corresponding classical qMC estimator with a number of Grover operator queries smaller than $2^q$. This supports the pre-asymptotic picture developed in the previous sections: the finer discretization $2^Q$ can be accessed coherently by the quantum circuit, while the amplitude-estimation routine may require fewer queries than the number of classical function evaluations used by qMC at resolution $2^q$.

\begin{figure}
    \centering
    \makebox[\textwidth][c]{\includegraphics[scale=0.5]{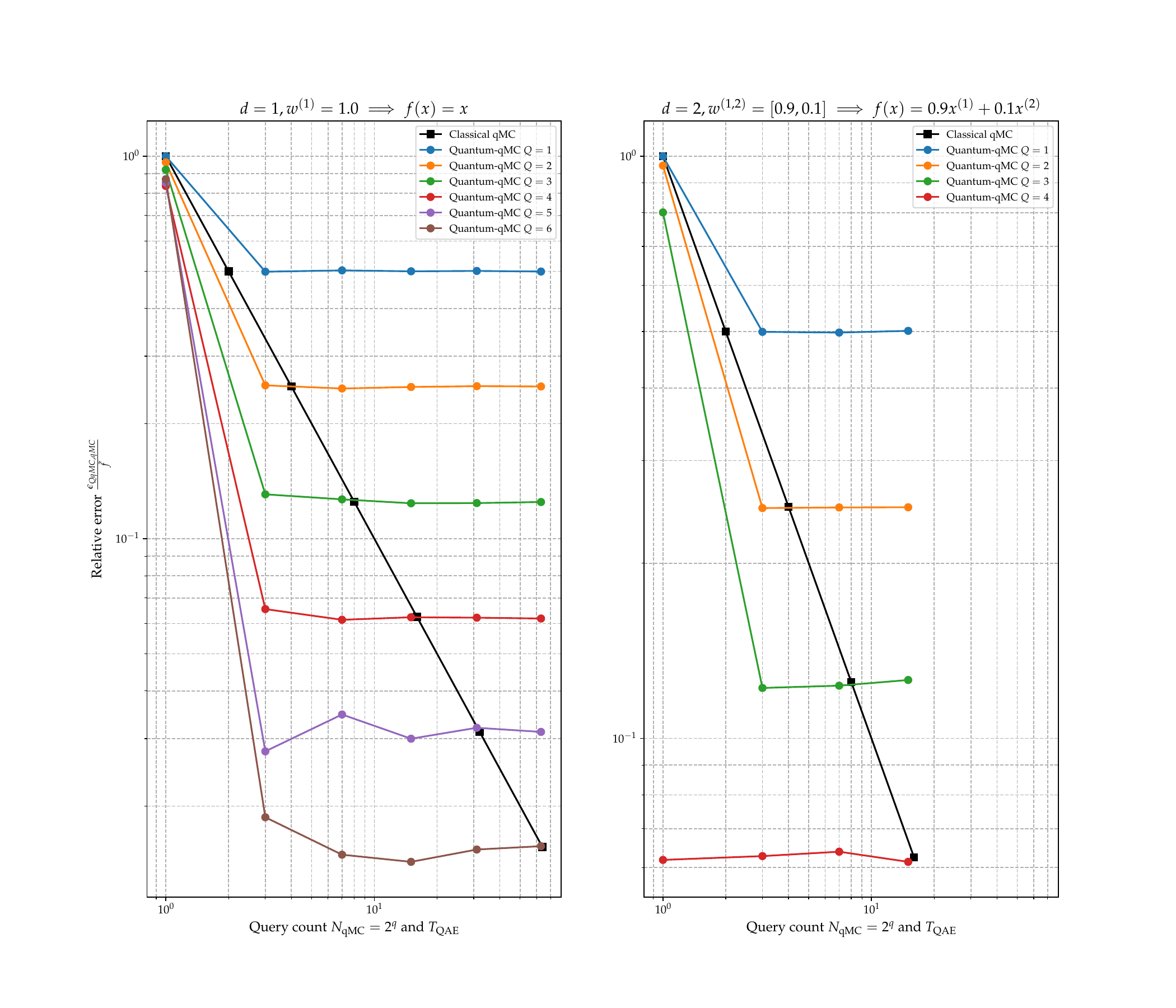}}
    \caption{Relative error $\varepsilon_{\mathrm{QqMC}}$ as the number of encoded Sobol points increases. The left panel reports the one-dimensional case with $w=1.0$, while the right panel reports the two-dimensional case with $w^{(1,2)}=(0.9,0.1)$. The parameters of the implementation are specified in \cref{tab:numerical_settings}. The results show that increasing the encoded Sobol resolution can reduce the QqMC error while keeping the number of Grover-operator queries below the number $2^q$ of classical qMC function evaluations.}
    \label{fig:1D_QqMC}
\end{figure}

At the same time, these results should be interpreted with caution. The current data cover only very small values of $d$, $q$, and $Q$, and therefore do not constitute evidence potential advantage for  other regimes, or of an asymptotic advantage. Moreover, the theoretical comparison developed in the previous sections concerns upper bounds, while the numerical results compare the actual observed errors. 

However, the implemented circuit contains several additional approximations that are not present in the idealized theoretical model, including the fixed-point integer approximation used in the arithmetic block, the grid-based least-squares fit used in the amplitude estimator, and the finite number of shots used to estimate the relevant probabilities. For this reason, the observed improvement is encouraging. Despite these practical sources of error, the QqMC estimator already reaches lower relative errors than the corresponding classical qMC estimator with fewer Grover-operator queries in the tested finite-size regimes. These preliminary results therefore suggest that the upper-bound advantage window identified theoretically may also be visible, at least qualitatively, in practical pre-asymptotic implementations.

\section{Discussion}
\label{sec:concl}

This work is to our knowledge  the first to investigate the use of low-discrepancy sequences in a quantum algorithm. The quantum algorithm is conceptually simple: it involves a subroutine for preparing the Sobol points via quantum arithmetic operations and for amplitude encoding of the function.
These subroutines can then be used together with the established techniques of amplitude estimation and its variants. The focus of the work is hence on finite parameter regimes where the quantum algorithm could produce advantages over the classical methods, quantified by the number of queries to the function. 

Our main result is the demonstration of a window of possible quantum advantage that is non-asymptotic. Using superpositions of more points can give (i) more accurate results at an equal number of queries or (ii) use less queries at the same accuracy. We have provided an analytical discussion in  \cref{sec:QqMC}. 
We note that, \cref{eq:Threshold_queries} identifies a window in the $(T, q)$-plane where the available upper bound for $\epsilon_{\rm QqMC}$ is
smaller than the corresponding upper bound for $\epsilon_{\rm qMC}$, while also requiring fewer quantum queries than the
2q classical qMC evaluations. We emphasize that this is a statement about the upper bounds and not about the
errors themselves, i.e., the existence of a window does not necessarily imply 
$\epsilon_{\rm QqMC} < \epsilon_{\rm qMC}$. Nevertheless,
the widening of the query-count gap inside this window, together with the preliminary numerical evidence
reported in \cref{sec:implementation_and_numerics}, suggests that the same region is a natural candidate for an actual quantum advantage.
Importantly, the analysis relies on \cref{ass:t_q_empirical}. This ansatz reproduces the computed values of tfinite
fairly well up to the tested dimensions and leads to a consistent qualitative picture across all the parameter
studies carried out in this section. However, since no exact closed form for $t_{\rm finite}$ is available in general,
and since the precise behavior depends on the chosen construction, the results should only be understood
as evidence for the potential for a pre-asymptotic difference in estimation error under the assumptions
considered here

The numerics demonstrate the effect on simple functions in \cref{sec:implementation_and_numerics}. There are several directions for future work. We leave more detailed study of the advantage window, in particular for specific use cases such as financial derivatives. This will involve more complex functions, where one has consider also the computational cost of quantum implementation. We also left open the exploration of a pre-asymptotic advantage for diverse types of low-discrepancy sequences other than Sobol' and for the vanilla Riemannian uniform grid. Nevertheless, the current work presents a valuable stepping stone for the use of quasi-random sequences in quantum algorithms.

\section*{Acknowledgments}
This work is supported by the National Research Foundation, Singapore through the National Quantum Office, hosted in A*STAR, under its Centre for Quantum Technologies Funding Initiative.
LLM tools were used for some low-level calculations, some of the code underlying the figures, and the initial drafting of the manuscript. 

\addcontentsline{toc}{section}{References}
\bibliographystyle{alpha}
\bibliography{ref}

\newpage
\appendix

\section{Theoretical background on Monte Carlo and quasi-Monte Carlo}
\label{sec:classical_monte_carlo}

\subsection{Nets, sequences, discrepancy and star discrepancy}
\label{sec:disc_def}

The uniformity of a point set is quantified by the discrepancy and star discrepancy, as defined below \cite{glasserman2004monte}.
\begin{definition}[Discrepancy]
\label{def:disc}
Let $\mathcal{A}$ be a family of (Lebesgue measurable) subsets of $[0,1)^d$. The discrepancy of points $\{x_0,\dots,x_{N-1}\}$ relative to $\mathcal{A}$ is
\begin{equation}
\label{eq:disc}
D(x_0,\dots,x_{N-1};A)
=
\sup_{A\in \mathcal{A}}
\left|
\frac{\#\{x_i\in A\}}{N} - \mathrm{vol}(A)
\right|.
\end{equation}
Here, $\#\{x_i \in A\}$ denotes the number of points $x_i$ contained in $A$, and $\mathrm{vol}(A)$ denotes the volume (measure) of $A$.
\end{definition}

\begin{definition}[Star discrepancy]
Restricting the same definition as in Definition~\ref{def:disc} to the case in which $\mathcal{A}$ consists of rectangles of the form
\begin{equation}
\prod_{j=1}^d [0,u_j)    
\end{equation}
defines the star discrepancy $D^*(x_0,\dots,x_{N-1})$.
\end{definition}

It is widely believed that the following conjecture holds~\cite{niederreiter1992random}:

\begin{conjecture}[Star discrepancy lower bound]
\label{conj}
For any finite sequence $x_1,\dots,x_n$ and any dimension $d$, there is a constant $c_d$ depending only on the dimension $d$ such that
\begin{equation}
    D^*(x_0,\dots,x_{N-1}) \geq c_d \frac{(\log N)^{d-1}}{N}.
\end{equation}
\end{conjecture}
Conjecture~\ref{conj} has been proved for $d\le2$~\cite{schmidt1972irregularities}, but it is widely believed to be true for any $d$~\cite{niederreiter1992random}. Conjecture~\ref{conj} motivates the following definition of a low-discrepancy sequence.

\begin{definition}[Low-discrepancy sequences]
In light of Conjecture~\ref{conj}, quasi-random sequences with the best possible order of convergence are those satisfying
\begin{equation}
    D^*(x_0,\dots,x_{N-1}) \leq C \frac{(\log N)^{d}}{N},
\end{equation}
where $C$ is a constant depending on the sequence. These sequences are called \emph{low-discrepancy} sequences.
\end{definition}

Another widely used class of low-discrepancy sequences is given by $(t,q,d)$-nets and, subsequently, $(t, d)$-sequences introduced by Niederreiter~\cite{niederreiter1987point}. Their definitions follow.

\begin{definition}[Elementary interval in base $b$]
An elementary interval in base $b$, with $b \geq 2$ an integer (also called a $b$-ary box), is a subset of $[0,1)^d$ of the form
\begin{equation}
    \prod_{i=1}^d \left[\frac{a_i}{b^{j_i}}, \frac{a_i + 1}{b^{j_i}} \right),
\end{equation}
where $j_i \in \{0,1,\dots\}$ and $a_i \in \{0,\dots,b^{j_i}-1\}$.
\end{definition}

\begin{definition}[$(t,q,d)$-nets]
\label{def:nets}
For integers $0 \leq t \leq q$, a $(t, q, d)$-net in base $b$ is a set of $b^q$ points in $[0, 1)^d$ such that exactly $b^t$ points fall in each $b$-ary box of volume $\frac{b^t}{b^q}$.
\end{definition}

\begin{definition}[$(t,d)$-sequences]
\label{def:tdseq}
A $(t,d)$-sequence in base $b$ is a sequence such that, for all $q > t$, each segment $\{x_i : jb^q < i \leq (j + 1)b^q\}$ for $j =\{0, 1, \dots\}$ is a $(t,m,d)$-net.    
\end{definition}

In other words, a $(t,d)$-sequence in base $b$ is a sequence whose successive blocks of $b^q$ points form $(t,q,d)$-nets.
It is important to note from Definition~\ref{def:nets} and Definition~\ref{def:tdseq} that smaller values of $t$ correspond to a higher degree of uniformity in the sequence. In fact, when $t$ is small, even relatively small $b$-ary boxes contain approximately the correct number of points. In the extreme case $t=0$, the definition leads to a sequence in which each admissible $b$-ary box contains exactly the prescribed number of points, corresponding to an almost perfectly uniform grid. Furthermore, all else being equal, a smaller base $b$ is preferable, since the uniformity properties of $(t,q,d)$-nets and $(t,d)$-sequences become apparent already for point sets of size $b^q$. As the base $b$ increases, a larger number of points is required before these uniformity properties manifest themselves. In this work, we mostly focus on $(t,q,d)$-nets in base $b=2$. For this purpose, we present the following relevant theorems.

\begin{theorem}[Corollary 5.3 in \cite{dick2010digital}]
\label{cor:up_bound_b2_disc}
The star discrepancy of a $(t, q, d)$-net in base $b=2$ satisfies
\begin{equation}
\label{eq:dstar_base2}
    D^*(x_0,\dots,x_{2^q-1}) \le 2^{t-q} \sum_{\nu=0}^{d-1} \binom{q-t}{\nu}.
\end{equation}
\end{theorem}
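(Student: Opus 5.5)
The plan is the standard counting argument for star discrepancy of nets in base $2$: decompose an anchored box into elementary intervals, on which a $(t,q,d)$-net is exact, and bound the leftover boundary region. Fix $u=(u_1,\dots,u_d)\in(0,1]^d$ and set $A=\prod_{j=1}^d[0,u_j)$. Write $m\coloneq q-t$. We want
\[
\bigl|\#\{x_i\in A\}/2^q-\mathrm{vol}(A)\bigr|\le 2^{-m}\sum_{\nu=0}^{d-1}\binom{m}{\nu}.
\]

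\textbf{Step 1 (expansion).} Expand each $u_j$ in binary as $u_j=\sum_{k\ge1}u_{j,k}2^{-k}$. Then $[0,u_j)$ is a disjoint union of dyadic intervals $[\,\lfloor 2^{k}u_j\rfloor 2^{-k}-2^{-k},\ \lfloor 2^k u_j\rfloor 2^{-k})$, one for each $k$ with $u_{j,k}=1$. Taking products gives $A$ as a disjoint union of elementary boxes indexed by tuples $(k_1,\dots,k_d)$, each of volume $2^{-(k_1+\cdots+k_d)}$.

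\textbf{Step 2 (inner part).} Split the boxes according to whether $k_1+\dots+k_d\le m$. Each box with $\sum k_j\le m$ has volume $\ge 2^{-m}=2^{t-q}$. It is therefore a union of elementary boxes of volume exactly $2^{t-q}$, each containing exactly $2^t$ points by Definition~\ref{def:nets}. Hence the count on this inner union $A_{\rm in}$ matches its volume exactly.

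\textbf{Step 3 (boundary part).} The remainder $A\setminus A_{\rm in}$ is handled by sandwiching. I would construct $A_{\rm in}\subseteq A\subseteq A_{\rm out}$. Here $A_{\rm out}$ adds, for each tuple with $\sum k_j$ at the boundary level, the full dyadic cell at level $m$ that is partially covered. Both the point count and the volume of $A$ lie between those of $A_{\rm in}$ and $A_{\rm out}$, so the error is at most $\mathrm{vol}(A_{\rm out}\setminus A_{\rm in})$ plus the corresponding count difference. These two quantities are equal, since $A_{\rm out}\setminus A_{\rm in}$ is again a union of net-exact elementary boxes.

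\textbf{Step 4 (the count, main obstacle).} The main obstacle is bounding $\mathrm{vol}(A_{\rm out}\setminus A_{\rm in})$ by exactly $2^{-m}\sum_{\nu=0}^{d-1}\binom{m}{\nu}$ without losing constants. I would proceed by induction on $d$, following Dick--Pillichshammer. For $d=1$ the error is at most $2^{-m}$, since $[0,u)$ differs from a union of cells of length $2^{-m}$ by at most one cell. For the inductive step, slice along the last coordinate: write $[0,u_d)$ as the union of the dyadic pieces of lengths $2^{-k}$, $k\le m$, plus one residual cell of length $2^{-m}$. On the piece at level $k$, the projection of the net onto the first $d-1$ coordinates, restricted to that slab, is a $(t,q-k,d-1)$-net. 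The induction hypothesis then gives a contribution of $2^{-k}\cdot 2^{-(m-k)}\sum_{\nu\le d-2}\binom{m-k}{\nu}$, and the residual cell contributes $2^{-m}$. Summing and using the hockey-stick identity $\sum_{k=0}^{m-1}\binom{m-1-k}{\nu}=\binom{m}{\nu+1}$ reassembles the terms into $2^{-m}\sum_{\nu=0}^{d-1}\binom{m}{\nu}$. The delicate points are:
\begin{itemize}
\item verifying the slab-restriction net property, which is immediate from the elementary-interval definition;
\item the bookkeeping of the residual term, so that the binomial sum closes exactly.
\end{itemize}
Taking the supremum over $u$ then completes the proof.
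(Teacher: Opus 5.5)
The paper does not prove this statement; it quotes it from Dick--Pillichshammer~\cite{dick2010digital} and uses it as a black box. So there is no in-paper argument to compare against. Your Step~4 is nonetheless a correct, self-contained proof: it is the standard induction on the dimension. Each ingredient checks out:
\begin{itemize}
\item For $k\le m$, the slab $[0,1)^{d-1}\times I_k$ holds exactly $2^{q-k}$ points. Their projections form a $(t,q-k,d-1)$-net, because $E\times I_k$ is an elementary box of volume $2^{t-q}$ whenever $E$ has volume $2^{t-(q-k)}$.
\item The residual piece $A'\times R$ lies in $[0,1)^{d-1}\times C$, where $C$ is a dyadic cell of length $2^{-m}$ holding exactly $2^t$ points. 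Hence its normalized count and its volume both lie in $[0,2^{-m}]$ and differ by at most $2^{-m}$. Your sketch omits this one-line justification.
\item With slab levels $k=1,\dots,m$, the identity $\sum_{k=1}^{m}\binom{m-k}{\nu}=\binom{m}{\nu+1}$ is your hockey-stick identity after re-indexing. It gives $2^{-m}\bigl(1+\sum_{\nu=0}^{d-2}\binom{m}{\nu+1}\bigr)=2^{-m}\sum_{\nu=0}^{d-1}\binom{m}{\nu}$.
\end{itemize}

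Steps~2--3 are unnecessary once you induct on the discrepancy itself, and Step~3 as written contains a slip. The sandwich gives
\[
\frac{\#(A_{\rm in})}{2^q}-\mathrm{vol}(A_{\rm out})\;\le\;\frac{\#(A)}{2^q}-\mathrm{vol}(A)\;\le\;\frac{\#(A_{\rm out})}{2^q}-\mathrm{vol}(A_{\rm in}).
\]
Since $A_{\rm in}$ and $A_{\rm out}$ are net-exact, the outer quantities are $\mp\,\mathrm{vol}(A_{\rm out}\setminus A_{\rm in})$. So the error is bounded by that volume alone; adding the count difference on top would double the constant.

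Your phrase ``the full cell at the boundary level'' is also too vague to give the exact constant. To make the sandwich route work without induction, add one cell $I_{k_1}\times\cdots\times I_{k_j}\times C_{j+1}\times[0,1)^{d-j-1}$ for each composition $(k_1,\dots,k_j)$ with $k_i\ge1$, $\sum_i k_i\le m$ and $0\le j\le d-1$. Here $C_{j+1}$ is the dyadic cell of length $2^{-(m-\sum_i k_i)}$ containing the residual in coordinate $j+1$. Each such cell has volume $2^{-m}$, and there are at most $\binom{m}{j}$ of them for each $j$, which recovers the bound. This is exactly your Step~4 unrolled.
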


\begin{theorem}[Theorem 4.17 in~\cite{niederreiter1992random}]
\label{th:niederreiter}
If $(x_0,\dots, x_{N-1})$ is a $(t,d)$-sequence in base $b=2$, then
\begin{equation}
\label{eq:dstdisc}
D^*(x_0,\dots,x_{N-1}) \le C(d)2^t\frac{(\log N)^d}{N} + O \left( 2^t \frac{(\log N)^{d-1}}{N} \right)
\end{equation}
where $O(\cdot)$ does not depend on $N$ and $t$ and the constant prefactor $C(d)$ is given by~\cite{niederreiter1992random}:
\begin{equation}
\label{eq:low_disc_prefact}
    C(d) = \frac1{d!} \left( \frac{1}{2\log 2}\right)^d.
\end{equation}
\end{theorem}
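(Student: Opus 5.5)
The plan is to reduce the discrepancy of an initial segment of a $(t,d)$-sequence to the discrepancy of a $(t,q,d+1)$-net, and then invoke a sharp base-$2$ net bound. The factor $\frac{1}{d!}\bigl(\frac{1}{2\log 2}\bigr)^d$ in \cref{eq:low_disc_prefact} is exactly the leading constant one expects for a base-$2$ net in dimension $d+1$ at level $q\approx\log_2 N$. That suggests the extra coordinate $i/2^q$ is what produces the $d$-th power of $\log N$.

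\textbf{Step 1 (net from sequence).} Fix $N$ and choose $q$ with $2^{q-1}<N\le 2^q$. Let $(x_i)$ be the $(t,d)$-sequence. I would first check that the lifted points $\bigl(i/2^q,\,x_i\bigr)$, $0\le i<2^q$, form a $(t,q,d+1)$-net in base $2$. Take an elementary box $[a/2^{j_0},(a+1)/2^{j_0})\times B$ of volume $2^{t-q}$. The first factor selects a block of $2^{q-j_0}$ consecutive indices aligned at a multiple of $2^{q-j_0}$. By \cref{def:tdseq}, that block is a $(t,q-j_0,d)$-net, and $B$ has volume $2^{t-(q-j_0)}$. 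Hence the block places exactly $2^t$ points in $B$, which is the net condition.

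\textbf{Step 2 (discrepancy transfer).} Next I would show that the counting function of the first $N$ points in a box $[0,u)$ equals the counting function of the lifted net in the box $[0,N/2^q)\times[0,u)$. This gives
\[
N\,D^*(x_0,\dots,x_{N-1})\le 2^q\,D^*\bigl((i/2^q,x_i)_{i<2^q}\bigr).
\]
Alternatively, I could avoid the lifting and argue directly by the binary expansion $N=\sum_m a_m2^m$. In that version the first $N$ points split into aligned blocks that are $(t,m,d)$-nets for $m\ge t$, plus fewer than $2^t$ leftover points. Summing the net bounds of \cref{cor:up_bound_b2_disc} over $m\le q$ via $\sum_{m}\binom{m-t}{d-1}=\binom{q-t+1}{d}$ gives the $(\log N)^d$ order. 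This gives the correct order $2^t(\log N)^d/N$, but with constant $\frac{1}{d!}(\log 2)^{-d}$, which is too large by $2^d$.

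\textbf{Step 3 (sharp constant).} To obtain $C(d)$ I would apply the refined base-$2$ net estimate from Niederreiter's Theorem~4.10 to the $(t,q,d+1)$-net of Step 1. That estimate has leading term $2^{t}\binom{q-t}{d}2^{-d}$, rather than the cruder sum of binomials in \cref{cor:up_bound_b2_disc}. Then $\binom{q-t}{d}\le q^d/d!$ together with $q\le \log N/\log 2+1$ produces $\frac{1}{d!}(2\log 2)^{-d}2^t(\log N)^d$. The lower-order binomials and the $+1$ in $q$ are absorbed into the $O(2^t(\log N)^{d-1})$ remainder; the $2^t$ factor is uniform, so this remainder is independent of $N$ and $t$ as claimed.

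\textbf{Main obstacle.} The hard part is the sharp constant in Step 3. The paper only supplies the weaker \cref{cor:up_bound_b2_disc}, so the halving-per-dimension bound must be proved. I would do so by induction on the dimension. Each coordinate interval $[0,u_j)$ is decomposed into dyadic pieces using a signed binary digit representation, choosing at each level whichever of $u_j$ or its complement needs fewer boxes. This limits the pieces needed to about $(q-t)/2$ per coordinate, and all boxes of volume at least $2^{t-q}$ are counted exactly by the net property. Checking that this signed decomposition keeps the error to one box of volume $2^{t-q}$ per leaf is the delicate bookkeeping.
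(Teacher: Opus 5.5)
Your Steps 1 and 2 are correct: the lifted points $(i/2^q,x_i)$ form a $(t,q,d+1)$-net, and $N D^*(x_0,\dots,x_{N-1})$ is at most $2^q$ times the star discrepancy of the lift. The gap is Step 3. Niederreiter's Theorem 4.10 has no base-$2$ estimate with leading term $2^t\binom{q-t}{d}2^{-d}$. For $b=2$ its constant is $B(s,2)=(2\log 2)^{-(s-1)}$ when $s=3,4$, with no factorial, and $\frac{1}{(s-1)!}(\log 2)^{-(s-1)}$ when $s\ge 5$. Applied to your $(t,q,d+1)$-net, this misses $C(d)$ by a factor $d!$ for $d=2,3$ and by $2^d$ for $d\ge 4$. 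The paper gives no proof here, only the citation. Moreover, for $b=2$ and $d\ge 5$, Niederreiter's Theorem 4.17 states the constant $\frac{1}{2\,d!}(\log 2)^{-d}$, which is $2^{d-1}$ times the displayed $C(d)$, so appealing to the source does not close the gap either. Everything therefore rests on the lemma in your last paragraph.

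As sketched, that lemma does not produce the $1/d!$. Capping the number of signed pieces at about $(q-t)/2$ per coordinate controls each coordinate separately. The leaves, however, are the tuples $(h_1,\dots,h_d)$ of piece levels with $h_1+\dots+h_d\le q-t$. If every coordinate's pieces sat at levels $1,\dots,\lfloor (q-t)/2\rfloor$, all such tuples would qualify. That gives about $((q-t)/2)^d$ leaves, which is exactly Niederreiter's factorial-free $s=3,4$ constant.

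What you need is a spacing property. Truncate each $u_j$ at level $q-t$ and write it in non-adjacent form $u_j=\sum_h\epsilon_{jh}2^{-h}$, with $\epsilon_{jh}\in\{-1,0,1\}$ and no two consecutive nonzero digits. Its nonzero levels $h^{(0)}<h^{(1)}<\cdots$ then satisfy $h^{(r)}\ge 2r$. Admissible tuples therefore inject into $\{(r_1,\dots,r_d):\sum_j r_j\le\lfloor (q-t)/2\rfloor\}$, so there are at most $\binom{\lfloor (q-t)/2\rfloor+d}{d}$ leaves. An alternative is to average over the $2^d$ ways of writing each $[0,u_j)$ either directly or as $[0,1)\setminus[u_j,1)$. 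The two level sets partition $\{1,\dots,q-t\}$, so some choice has at most $2^{-d}\binom{q-t}{d}+O(q^{d-1})$ leaves.

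The per-leaf bookkeeping you call delicate is routine. A signed leaf $\pm\prod_j I_{jh_j}\times[0,u_{d+1})$ has counting error at most $2^t$ whatever its sign. This follows from inner and outer approximation of the undecomposed coordinate at resolution $2^{-(q-t-\sum_j h_j)}$. The partial sums satisfy $u_j^{(g)}\in 2^{-g}\mathbb{Z}$ and $|u_j-u_j^{(g)}|<2^{-g}$, so each truncation tail lies in one elementary interval. The tails contribute only $O(2^t q^{d-1})$. With this lemma in place, your Steps 1--3 do yield $C(d)=\frac{1}{d!}(2\log 2)^{-d}$.
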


From Theorem~\ref{th:niederreiter}, any $(t,d)$-sequence (and, as a special case, any $(t,q,d)$-net) is a low-discrepancy sequence.

\subsection{Hardy-Krause Variation and the Koksma-Hlawka Theorem}
\label{sec:HKvariation}

The definitions provided in the previous sections are essential for proving an upper bound on the error in the integral estimate in Eq.~\eqref{eq:intqmc} when using a quasi-Monte Carlo method. The error of qMC integration is bounded through the Koksma-Hlawka inequality, which requires the following preliminary definition.

\begin{definition}[Measure of variation of a function $f$]
\label{def:mesvar}
Consider an interval consisting of rectangles of the form
\begin{equation}
    J = [u_1^-,u_1^+]\times\dots \times[u_d^-,u_d^+],
\end{equation}
with $0 \le u^{-}_i \le u^{+}_i \le 1$ and  $ i = 1,\dots , d$. Let $\mathrm{Even}(J)$ be the set of vertices of $J$ with an even number of $+$ superscripts, and let $\mathrm{Odd}(J)$ contain those with an odd number of $+$ superscripts. We define
\begin{equation}
\label{eq:delta}
    \Delta(f;J) = \sum_{u \in\mathrm{Even}(J) } f(u) -  \sum_{u \in\mathrm{Odd}(J) } f(u),
\end{equation}
which represents the alternating sum of the values of $f(u)$ over the vertices of $J$, with adjacent vertices assigned opposite signs. The unit hypercube can be partitioned into a set $\mathcal{P}$ of rectangles of the form $J$. Letting $\mathcal{P}$ range over all such partitions, we define the measure of variation of a function $f$ as
\begin{equation}
\label{eq:variation}
 V^{(d)}(f) = \sup_{\mathcal{P}} \sum_{J \in \mathcal{P}} |\Delta(f;J)|.
\end{equation}
In the case in which the partial derivatives of $f$ are continuous, the measure of variation of a function $f$ can also be written as
\begin{equation}
 V^{(d)}(f) = \int_0^1 \dots \int_0^1 \left|\frac{\partial^d f}{\partial u_1 \dots \partial u_d} \right| du_1 \dots du_d.
\end{equation}
\end{definition}

\begin{definition}[Hardy-Krause variation]
\label{def:HKvar}
Let $f$ be defined on $[0,1)^d$. For any $1 \le k \le d$ and any $1 \le i_1 < i_2 < \dots < i_k \le d$, consider the function on $[0, 1]^k$ obtained by restricting $f$ to points $(u_1, \dots, u_d)$ with $u_j = 1$ if $j \notin \{i_1, \dots , i_k\}$ and with $(u_{i_1} , \dots , u_{i_k})$ ranging over all of $[0, 1]^k$. We denote by $V^{(k)}(f ; i_1, \dots ,i_k)$ the measure of variation $V^{(k)}$, as in Definition~\ref{def:mesvar}, of the restricted function defined above. The Hardy-Krause variation is
\begin{equation}
\label{eq:HKvar}
V(f) = \sum_{k=1}^d \sum_{1\le i_1<\dots<i_k\le d} V^{(k)}(f;i_1,\dots,i_k).
\end{equation}
\end{definition}

\begin{theorem}[Koksma-Hlawka inequality]
\label{th:khineq}
If $f$ has finite Hardy-Krause variation, then
\begin{equation}
\label{eq:khineq}
 \epsilon_{qMC} = \left| \int_{[0,1)^d} f(u)du - \frac{1}{N}\sum_{i=0}^{N-1} f(x_i) \right|\le V(f)D^*(x_0,\dots,x_{N-1}).
\end{equation}
\end{theorem}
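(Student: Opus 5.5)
The Koksma--Hlawka inequality is classical, so I will prove it via the standard Hlawka--Zaremba route rather than rederive the full theory. I first treat the case where $f$ is smooth enough that all mixed partial derivatives $\partial_{\mathfrak u} f$, for $\mathfrak u\subseteq\{1,\dots,d\}$, are continuous. The general case of finite Hardy--Krause variation then follows by approximation, or by replacing derivatives with the signed measures induced by the quasi-interval functions $\Delta(f;J)$ of \cref{def:mesvar}.

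The central tool is the local discrepancy function
\[
\delta(u)=\frac{1}{N}\#\Bigl\{i: x_i\in \textstyle\prod_{j}[0,u_j)\Bigr\}-\prod_j u_j ,
\]
whose supremum over $u$ is exactly $D^*$. The key identity is Zaremba's formula
\[
\frac1N\sum_{i=0}^{N-1} f(x_i)-\int_{[0,1)^d}f
=\sum_{\emptyset\neq \mathfrak u\subseteq\{1,\dots,d\}}(-1)^{|\mathfrak u|}\int_{[0,1]^{|\mathfrak u|}}\delta(u_{\mathfrak u},1)\,\partial_{\mathfrak u} f(u_{\mathfrak u},1)\,du_{\mathfrak u}.
\]
Here $(u_{\mathfrak u},1)$ denotes the point whose coordinates outside $\mathfrak u$ are set to $1$, matching the anchoring at $1$ in \cref{def:HKvar}.

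To derive the identity, I write $f$ via the multivariate fundamental theorem of calculus anchored at the corner $\mathbf 1$:
\[
f(x)=\sum_{\mathfrak u}(-1)^{|\mathfrak u|}\int_{\prod_{j\in\mathfrak u}[x_j,1]}\partial_{\mathfrak u}f(u_{\mathfrak u},1)\,du_{\mathfrak u}.
\]
I then rewrite each integration region using indicators $\mathbf 1\{x_j\le u_j\}$ and apply Fubini. Averaging over the points $x_i$ produces the empirical counting term, integrating over $x\in[0,1)^d$ produces $\prod_{j\in\mathfrak u}u_j$, and the $\mathfrak u=\emptyset$ term $f(\mathbf 1)$ cancels. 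The subtle point is that on the face $(u_{\mathfrak u},1)$ the coordinates outside $\mathfrak u$ are unconstrained, so the count there is exactly the lower-dimensional box count. Open versus closed box boundaries affect only a null set.

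With the identity in hand, I bound each term by $\sup|\delta|\cdot\int|\partial_{\mathfrak u}f(\cdot,1)|$. The first factor is at most $D^*(x_0,\dots,x_{N-1})$, and the second is precisely $V^{(|\mathfrak u|)}(f;\mathfrak u)$ in the smooth form of \cref{def:mesvar}. Summing over $\mathfrak u$ gives $V(f)D^*$ by \eqref{eq:HKvar}.

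The main obstacle is extending the bound beyond smooth $f$. There I would use the Riemann--Stieltjes version: a function of bounded Hardy--Krause variation induces finite signed measures $df_{\mathfrak u}$ on each face, with total variation $V^{(|\mathfrak u|)}(f;\mathfrak u)$. The same integration-by-parts argument goes through with $\partial_{\mathfrak u}f\,du$ replaced by $df_{\mathfrak u}$, which is the step where I would cite the standard treatment in \cite{niederreiter1992random}.
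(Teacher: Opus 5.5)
The paper gives no proof of this statement. It quotes the Koksma--Hlawka inequality as a classical result (following \cite{glasserman2004monte}) and uses it only as a black box to obtain \cref{eq:err_qmc_sob_pap}. Your Hlawka--Zaremba route is therefore a genuinely different route, and for $f$ with continuous mixed partials it is correct and complete. The checks all go through:
\begin{itemize}
\item the anchored expansion at $\mathbf 1$ with sign $(-1)^{|\mathfrak u|}$ is right (a direct check in $d=2$ confirms it);
\item Fubini produces exactly $\delta(u_{\mathfrak u},1)$, because all points lie in $[0,1)^d$;
\item the open versus closed boundary issue is a null set;
\item $\sup|\delta(\cdot,1)|\le D^*$, because the star discrepancy admits $u_j=1$.
\end{itemize}
You also pair the anchorings correctly: $\mathbf 0$ for $D^*$, and $\mathbf 1$ for the variation as in \cref{def:HKvar}. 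This covers the smooth linear integrands used in \cref{sec:implementation_and_numerics}. What the citation buys the paper, and your argument does not yet deliver, is the full class of functions of finite Hardy--Krause variation.

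You rightly flag that extension as the main obstacle, but neither of your two proposed mechanisms works as stated. The estimator involves point values $f(x_i)$, and a general function of bounded variation need not be one-sided continuous. Take $d=1$, $f=\mathbf 1\{x=1/2\}$, $N=1$, $x_0=1/2$. Then $V(f)=2$, $D^*=1/2$, and the error is $1=V(f)D^*$.
\begin{itemize}
\item Approximation fails: since $f=0$ almost everywhere, every mollification of $f$ vanishes identically and carries no information about $f(x_0)$.
\item The signed-measure substitution fails: the Lebesgue--Stieltjes measure induced by $f$, through either one-sided regularization, is zero. Replacing $\partial f\,du$ by $df$ would therefore predict error $0$.
\end{itemize}
In general, a representation $f(x)=\sum_{\mathfrak u}(-1)^{|\mathfrak u|}\mu_{\mathfrak u}([x_{\mathfrak u},1])$ by signed measures requires a one-sided continuity normalization of $f$. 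In $d=1$, $x\mapsto\mu([x,1])$ is automatically left-continuous, which $\mathbf 1\{x=1/2\}$ is not.

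The classical remedy, already present in Koksma's one-dimensional proof, is to split $[0,1]$ at the ordered points. On each closed piece the integrand is then the continuous function $t-n/N$, the Riemann--Stieltjes integrals against any function of bounded variation exist, and the boundary terms telescope to exactly $\frac1N\sum_i f(x_i)$. In several dimensions one works analogously on the cells of the grid generated by the point coordinates, or cites a treatment of general Hardy--Krause functions (via \cite{niederreiter1992random} and references therein). Since the paper treats the whole theorem as a citation, deferring this step to the literature is acceptable. Your sketch should not, however, assert that the smooth argument transfers verbatim.
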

The integration error is bounded by the product of two terms:
\begin{itemize}
\item a measure of the Hardy-Krause variation of the integrand $f$,
\item a measure of the uniformity of the points (the star discrepancy of the sequence).
\end{itemize}
In particular, if we choose a low-discrepancy $(t,d)$-sequence (or net), then from Eq.~\eqref{eq:khineq} and~\eqref{eq:dstdisc}, the error bound for the quasi-Monte Carlo estimator reads
\begin{equation}
\label{eq:tderrbdisc}
 \epsilon_{qMC}  \le V(f) C(d,b)b^t \frac{(\log N)^d}{N} + O\left(b^t \frac{(\log N)^{d-1}}{N} \right),
\end{equation}
which leads to the well-known Quasi-Monte Carlo bound
\begin{equation}
 \epsilon_{qMC}  = O \left(\frac{(\log N)^d}{N}\right).
\end{equation}
For low dimension $d$, the bound scales nearly as the desired $O(1/N^{1-\varepsilon})$, where $\varepsilon > 0$ and the logarithmic factor can be absorbed into any power of $N$. 

It is important to highlight the dependence of this upper bound on the dimension $d$. From Theorem~\ref{th:niederreiter}, Eq.~\eqref{eq:dstdisc}, and Eq.~\eqref{eq:low_disc_prefact}, two competing effects emerge. On the one hand, the first leading order constant $C(d,b)$ decreases superexponentially, due to its factorial dependence on $d$. On the other hand, the factor $(\log N)^d$ grows polylogarithmically with the dimension. In addition, there is also an implicit dependence on $d$ through the quality parameter $t$. In practice, keeping $t$ small becomes increasingly difficult as $d$ grows, and this behavior depends strongly on the specific family of low-discrepancy sequences under consideration. In particular, while Faure sequences~\cite{faure1982discrepance} achieve the optimal value $t=0$ at the price of using a base at least as large as the dimension, Sobol's sequences~\cite{sobol_distribution_1967} retain the computational advantages of base $2$ but typically exhibit values of $t$ that grow with the dimension (more details are in the following Section). Additionally, from Corollary~\ref{cor:up_bound_b2_disc} and Eq.~\eqref{eq:dstar_base2}, we observe that the non-asymptotic upper bound (valid for all $q$) can only increase in $d$, no matter how low the quality parameter $t$ is. Therefore, although qMC can substantially outperform classical Monte Carlo in low and moderate dimensions, its practical efficiency in higher dimensions depends critically not only on the asymptotic form of the discrepancy bound, but also on the availability of constructions with sufficiently small values of $t$.

\section{Fitting the ansatz to quality parameters generated from the Kuo direction numbers}
\label{app:fit_ansatz}

This appendix describes how the finite-resolution Sobol quality parameter $t_{\rm{finite}}$ is computed and how the empirical ansatz used in \cref{eq:t_q_approx} is fitted.

\subsection{Exact computation of $t_{\rm{finite}}$}
\label{app:exact_tq}

For each dimension $d$, we generate the Sobol generating matrices according to the primitive polynomial degrees and direction numbers of the Joe-Kuo construction~\cite{joe2008constructing}. Let $ V^{(\nu)}$ denote the binary generating matrix associated with the $\nu$th Sobol coordinate defined in \cref{eq:sobol_generator_matrix_nu}. The finite-resolution quality parameter $t_{\rm{finite}}$ is then computed from the linear-independence parameter $\rho$ as in \cref{eq:tdq} and \cite{niederreiter1992random}. The parameter $\rho$ is defined as the largest integer $\rho \le q$ such that, for every choice of non-negative integers
\begin{equation}
    k_1,\dots,k_d \in \mathbb{N}_0,
    \qquad
    k_1+\cdots+k_d \le \rho,
\end{equation}
the collection of the first $k_\nu$ rows of each matrix $V^{(\nu)}$, for $\nu=1,\dots,d$, is linearly independent over $\mathbb{F}_2$. Equivalently,
\begin{equation}
    \rho = \max \left\{ \rho \le q: \operatorname{rank}_{\mathbb{F}_2}
    \left(
    \begin{array}{c}
    V^{(1)}_{(1:k_1)} \\
    \vdots \\
    V^{(d)}_{(1:k_d)}
    \end{array}
    \right)
    = k_1+\cdots+k_d \quad \forall (k_1,\dots,k_d) \text{ with } \sum_{\nu=1}^d k_\nu \le \rho \right\},
    \label{eq:rho_definition}
\end{equation}
where $V^{(\nu)}_{(1:k_\nu)}$ denotes the submatrix formed by the first $k_\nu$ rows of $V^{(\nu)}$. In this work, the values of $t_{\rm{finite}}$ are computed exactly, by direct binary-rank checks, for $d \le 15  \cup  d=20,25,30,40,50$ and $q\le 25$. \cref{tab:tdq_values} reports representative values for $d=5,10,20,$ and $50$.

\begin{table}[t]
\centering
\small
\setlength{\tabcolsep}{5pt}
\begin{tabular}{c cccc}
\toprule
& \multicolumn{4}{c}{$t_{\rm{finite}}$} \\
\cline{2-5}
$q$ & $d=5$ & $d=10$ & $d=20$ & $d=50$ \\
\midrule
  1 & 0 & 0  & 0  & 0 \\
 2 & 1 & 1  & 1  & 1 \\
 3 & 2 & 2  & 2  & 2 \\
 4 & 2 & 3  & 3  & 3 \\
 5 & 2 & 3  & 4  & 4 \\
 6 & 3 & 4  & 4  & 5 \\
 7 & 3 & 4  & 5  & 6 \\
 8 & 3 & 5  & 5  & 6 \\
 9 & 3 & 6  & 6  & 6 \\
10 & 3 & 6  & 7  & 7 \\
11 & 4 & 7  & 8  & 8 \\
12 & 4 & 6  & 9  & 9 \\
13 & 5 & 7  & 9  & 10 \\
14 & 4 & 8  & 10 & 11 \\
15 & 4 & 9  & 11 & 12 \\
16 & 5 & 9  & 12 & 13 \\
17 & 4 & 9  & 13 & 14 \\
18 & 5 & 10 & 13 & 15 \\
19 & 5 & 10 & 13 & 15 \\
20 & 5 & 11 & 14 & 15 \\
21 & 5 & 11 & 15 & 16 \\
22 & 5 & 12 & 16 & 17 \\
23 & 5 & 11 & 15 & 18 \\
24 & 5 & 12 & 16 & 19 \\
25 & 5 & 10 & 16 & 20 \\
\bottomrule
\end{tabular}
\caption{Representative exact values of $t_{\rm{finite}}$ computed from the linear-independence parameter $\rho$ of the upper-left $q\times q$ Sobol generating matrices for dimensions $d=5,10,20,50$}
\label{tab:tdq_values}
\end{table}

\subsection{Empirical ansatz and fitted parameter}
\label{app:empirical_ansatz_fit}

The exact computation of $t_{\rm{finite}}$ requires explicit rank checks over the binary generating matrices. To obtain a simple closed-form approximation suitable for the theoretical analysis in the main text, we fit the proposed empirical ansatz defined in \cref{eq:t_q_approx}. The parameter $\gamma$ is fitted by minimizing the squared error between the exact values $t_{\rm{finite}}$ and the approximation $\tilde t$ over the computed grid of $\mathcal{D}=(d,q)$. The fitted value is
\begin{equation}
    \gamma = 0.77215,
\end{equation}
And the resulting root-mean-square error is
\begin{equation}
    \operatorname{RMSE}
    =
    \left[
    \frac{1}{|\mathcal{D}|}
    \sum_{(d,q)\in \mathcal{D}}
    \left(
    t_{\rm{finite}}-\tilde t
    \right)^2
    \right]^{1/2}
    =
    0.911,
    \label{eq:tq_fit_rmse}
\end{equation}
Equivalently, this corresponds to an average squared error of approximately
$\operatorname{MSE} \simeq 0.830$.
Although the true finite-resolution values $t_{\rm{finite}}$ are not exactly monotone in $q$ for all dimensions, the ansatz captures the dominant saturation trend: for small $q$, the quality parameter increases approximately linearly with the resolution, while for larger $q$ it approaches the construction-dependent limiting value $t_{\rm{Sobol}}$.

\begin{figure}[t]
    \centering
    \includegraphics[width=1.\textwidth]{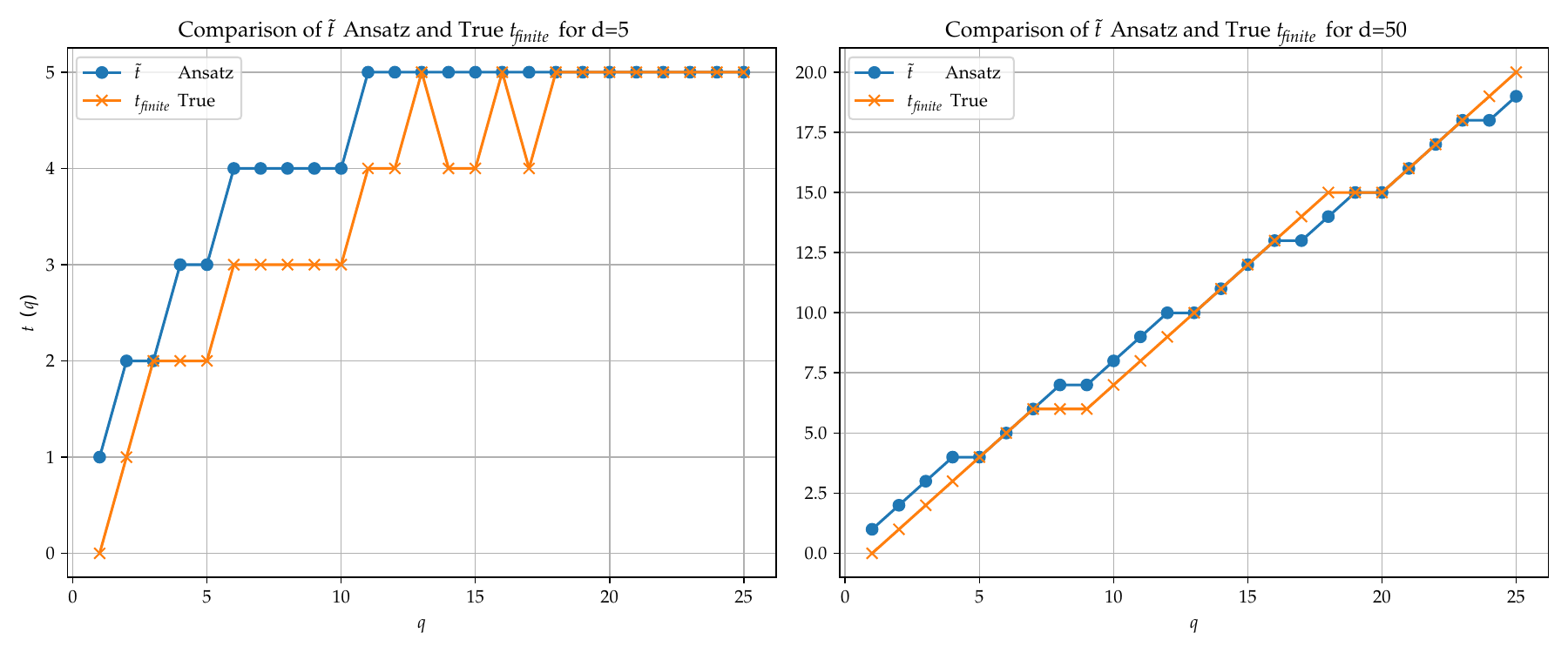}
    \caption{Comparison between the exact values $t_{\rm{finite}}$ and the empirical approximation $\tilde t$ for $d=5$ and $d=50$. The same fitted value $\gamma=0.77215$ is used for all dimensions.}
    \label{fig:tq_fit_examples}
\end{figure}

The agreement shown in \cref{fig:tq_fit_examples} indicates that a single global value of $\gamma$ already provides a reasonable approximation of the finite-resolution quality parameter across the tested dimensions.\footnote{One could fit a different value of $\gamma$ for each dimension $d$, which would generally improve the pointwise agreement with the exact values $t_{\rm{finite}}$. However, the purpose of \cref{eq:t_q_approx} is to provide a simple and interpretable empirical model for use in the theoretical analysis. Since a single value of $\gamma$ already captures the observed trend reasonably well, we keep one global parameter for all dimensions.}

\section{Evidence of pre-asymptotic quantum advantage in QqMC}
\label{app:proof_QqMC}

In this Appendix, we provide the proofs necessary to conduct the pre-asymptotic analysis in \cref{sec:QqMC}. Throughout this Appendix, we exploit the following notations 

\begin{equation}
\label{eq:estimators_def}
\begin{aligned}
&\tilde f \coloneq\int_{[0,1)^d} f(u)\,du,  \qquad\text{true integral value},\\
&f_q \coloneq\frac{1}{2^q}\sum_{i=1}^{2^q}f(x_i), \qquad\text{qMC estimator using $2^q$ Sobol points},\\
&f_Q \coloneq\frac{1}{2^Q}\sum_{i=1}^{2^Q}f(x_i), \qquad \text{Sobol estimator represented in the QqMC circuit},\\
&\hat f_Q(T) \coloneq \quad \text{QAE estimate of $f_Q$ using $T$ Grover-operator queries}.
\end{aligned}
\end{equation}
The corresponding errors are 
\begin{equation}
\label{eq:error_definitions}
\begin{aligned}
\epsilon_{qMC} &\coloneq|f_q-\tilde f|, &\text{classical qMC error with $2^q$ Sobol points},\\
\epsilon_{QqMC} &\coloneq |\hat f_Q(T)-\tilde f|, &\text{total QqMC error}.
\end{aligned}
\end{equation}

\begin{theorem}[Pre-asymptotic upper-bound advantage window for QqMC]
\label{th:Tthreshold}
Let use the notation introduced in \cref{eq:basic_qmc_notation,eq:Tmin_def}, and in \cref{eq:estimators_def,eq:error_definitions}. Assume $V(f)>0$. The upper bound for the proposed QqMC error $\epsilon_{QqMC}$ is strictly lower than the upper bound for the classical qMC error $\epsilon_{qMC}$ whenever the tuple $(T,q,l)$ satisfies
\begin{equation}
\label{eq:Tthreshold}
\boxed{
\Delta>0
\qquad\text{and}\qquad
T\ge T_{\min}(q,l,d,V(f)),
}
\end{equation}
with $T_{\min}$ defined in \cref{eq:Tmin_def}.

\end{theorem}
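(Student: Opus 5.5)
The plan is to compare the two upper bounds directly, using the error split \cref{eq:qqmc_error_decomposition}. Here ``upper bound'' means the empirical bound \cref{eq:final_err_qmc_sob} together with the QAE bound \cref{eq:aerr}.

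\textbf{Step 1: the two bounds.} For the classical estimator with $2^q$ Sobol points, \cref{th:khineq}, \cref{cor:up_bound_b2_disc} and the ansatz \cref{eq:t_q_approx} give
\[
\epsilon_{qMC}\le B_{qMC}:=V(f)\,\tilde g(q).
\]
For QqMC, split $\epsilon_{QqMC}$ by the triangle inequality \cref{eq:qqmc_error_decomposition}. The discretization term $|f_Q-\tilde f|$ is bounded by $V(f)\tilde g(q+l)$, by the same argument applied to the net of size $2^{Q}=2^{q+l}$. The estimation term $|\hat f_Q(T)-f_Q|$ is bounded by $a(T)$ from \cref{eq:aerr}, which holds with probability at least $8/\pi^2$. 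Hence
\[
\epsilon_{QqMC}\le B_{QqMC}(T):=a(T)+V(f)\tilde g(q+l).
\]

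\textbf{Step 2: reduce to an inequality in $T$.} The claim $B_{QqMC}(T)<B_{qMC}$ is equivalent to $a(T)<V(f)(\tilde g(q)-\tilde g(q+l))=\Delta$. This requires $\Delta>0$, because $a(T)>0$ for every finite $T$.

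\textbf{Step 3: solve for $T$.} Set $x=\pi/T>0$. Then $a(T)=x+x^2$, which is strictly increasing in $x$, so $a$ is strictly decreasing in $T$. The equation $x^2+x-\Delta=0$ has the unique positive root
\[
x^\ast=\tfrac{\sqrt{1+4\Delta}-1}{2}.
\]
Hence $a(T)<\Delta$ holds exactly when $x<x^\ast$, that is, when $T>T^\ast:=2\pi/(\sqrt{1+4\Delta}-1)$.

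\textbf{Step 4: the rounding (main obstacle).} The definition \cref{eq:Tmin_def} uses $\lceil T^\ast\rceil$, and integrality has to be handled carefully. If $T^\ast$ is not an integer, then $T\ge\lceil T^\ast\rceil$ forces $T>T^\ast$ and the strict inequality follows. If $T^\ast$ happens to be an integer, then $T=T_{\min}$ gives $a(T)=\Delta$, so the bounds are equal rather than strictly ordered.

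I would deal with this edge case in one of two ways:
\begin{itemize}
\item read the ceiling as ``smallest integer strictly larger than $T^\ast$'', i.e.\ $\lfloor T^\ast\rfloor+1$, which agrees with $\lceil T^\ast\rceil$ except on this measure-zero set of $\Delta$; or
\item state the theorem for $T>T^\ast$ and note that the two thresholds coincide generically.
\end{itemize}
This matches the text's phrase ``smallest integer $T$ satisfying $a(T)<\Delta$''.

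Monotonicity of $a$ then extends the conclusion to every $T\ge T_{\min}$, which proves \cref{eq:Tthreshold}. Finally, I would note that the statement compares bounds, not the errors themselves, and holds with the QAE success probability inherited from \cref{eq:aerr}.
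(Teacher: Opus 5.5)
Your proposal is correct and follows the paper's proof essentially step for step: the triangle-inequality split into $a(T)+V(f)\tilde g(q+l)$, the reduction to $a(T)<\Delta$ (forcing $\Delta>0$), and solving for the positive root $2\pi/(\sqrt{1+4\Delta}-1)$, where your substitution $x=\pi/T$ is only a cosmetic variant of the paper's multiplying through by $T^2$. Your Step 4 is in fact more careful than the paper, whose proof simply asserts that $\lceil 2\pi/(\sqrt{1+4\Delta}-1)\rceil$ is the smallest admissible integer; when that root equals an integer $n$ (i.e.\ $\Delta=a(n)$, which is attainable by tuning $V(f)$), $T=T_{\min}$ gives only equality, so your repair via $\lfloor T^\ast\rfloor+1$ or $T>T^\ast$ is genuinely needed for the strict inequality.
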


\begin{proof}
Using \cref{eq:error_definitions} and the triangle inequality, one obtains
\begin{equation}
\label{eq:triangle_QqMC}
\epsilon_{QqMC}
=
|\hat f_Q(T)-\tilde f|
\le
|\hat f_Q(T)-f_Q|+|f_Q-\tilde f|.
\end{equation}
By the QAE error bound in \cref{eq:aerr} and the Sobol qMC bound in \cref{eq:final_err_qmc_sob}, this gives
\begin{equation}
\label{eq:errbQqmc}
\epsilon_{QqMC}
\le
a(T)+V(f)\tilde g (Q)
=
a(T)+V(f) \tilde g(q+l).
\end{equation}
Always from~\cref{eq:errbQqmc}, the classical qMC error with $2^q$ Sobol points satisfies
\begin{equation}
\label{eq:classical_qmc_bound_q}
\epsilon_{qMC}
\le
V(f)\tilde g(q).
\end{equation}
Therefore, the QqMC upper bound is strictly smaller than the classical qMC upper bound whenever
\begin{equation}
\label{eq:basiccomparison}
a(T)+V(f) \tilde g(q+l)<V(f) \tilde g(q).
\end{equation}
Using the definition of $\Delta$ in \cref{eq:Deltaq_def}, this is equivalent to
\begin{equation}
\label{eq:aT_less_Delta}
a(T)<\Delta.
\end{equation}
If $\Delta\le0$, \cref{eq:aT_less_Delta} cannot hold because $a(T)>0$. Hence we must require $\Delta>0$. Under this condition, \cref{eq:aT_less_Delta} becomes
\begin{equation}
\label{eq:quadratic_T_condition}
\frac{\pi}{T}+\frac{\pi^2}{T^2}<\Delta.
\end{equation}
Multiplying by $T^2>0$ gives
\begin{equation}
\label{eq:quadratic_T_positive}
\Delta T^2-\pi T-\pi^2>0.
\end{equation}
Since $\Delta>0$, this quadratic inequality holds precisely when $T$ is strictly larger than the positive root,
\begin{equation}
\label{eq:positive_root_T}
T>
\frac{\pi+\sqrt{\pi^2+4\pi^2\Delta}}{2\Delta}
=\frac{2\pi}{\sqrt{1+4\Delta}-1}.
\end{equation}
Because $T$ is an integer, the smallest admissible value is exactly \cref{eq:Tmin_def}. This proves the claim.
\end{proof}

For the proof of the next theorem it is useful to introduce the following notations 

\begin{equation}
\label{eq:nz_phid_def}
n_z\coloneq z-\tilde t(z),
\qquad
\phi(n)\coloneq2^{-n}\sum_{i=0}^{d-1}\binom{n}{i},
\end{equation}
so that
\begin{equation}
\label{eq:gd_phid_relation}
\tilde g (z)=\phi(n_z).
\end{equation}
In this appendix, for the sake of clarity, we made the dependence of $t$ on $q$ explicit.
\begin{theorem}[Positivity of $\Delta$]
\label{th:Delta_sign}
Fix $d,l\in\mathbb N$. For any $V(f)\ge0$, the difference $\Delta$ defined in~\cref{eq:Deltaq_def} is always non-negative:
\begin{equation}
\label{eq:Delta_nonnegative}
\Delta\ge0 \qquad\forall q\ge1.
\end{equation}

Furthermore, if $V(f)>0$, strict positivity ($\Delta>0$) is achieved if and only if both of the following conditions hold:
\begin{align}
    & n_{q+l} \ge d \\
    & n_{q+l} > n_q
\end{align}
Equivalently, the second condition can be written directly in terms of the ansatz as $\tilde t(q+l)-\tilde t(q)<l$.
\end{theorem}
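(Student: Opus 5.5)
The plan is to reduce everything to properties of the one-variable function $\phi$ in \cref{eq:nz_phid_def}. By \cref{eq:gd_phid_relation},
\[
\Delta=V(f)\bigl(\phi(n_q)-\phi(n_{q+l})\bigr).
\]
So it suffices to establish two facts: how $\phi$ behaves under a unit step $n\mapsto n+1$, and that $z\mapsto n_z$ is nondecreasing.

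\textbf{Step 1: the unit increment of $\phi$.} I will compute $\phi(n+1)-\phi(n)$ exactly using Pascal's rule $\binom{n+1}{i}=\binom{n}{i}+\binom{n}{i-1}$. This gives
\[
\sum_{i=0}^{d-1}\binom{n+1}{i}=2\sum_{i=0}^{d-1}\binom{n}{i}-\binom{n}{d-1},
\]
and therefore
\[
\phi(n+1)-\phi(n)=-2^{-(n+1)}\binom{n}{d-1}.
\]
Consequently $\phi$ is nonincreasing on $\mathbb N_0$. The step from $n$ to $n+1$ is strict if and only if $\binom{n}{d-1}>0$, that is, if and only if $n\ge d-1$. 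Telescoping then gives, for any $0\le m\le M$,
\[
\phi(m)-\phi(M)=\sum_{n=m}^{M-1}2^{-(n+1)}\binom{n}{d-1}\ \ge 0.
\]
This sum is strictly positive exactly when the range $\{m,\dots,M-1\}$ contains some $n\ge d-1$. That happens if and only if $M>m$ and $M\ge d$.

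\textbf{Step 2: monotonicity of $n_z$.} I expect this to be the main obstacle, since it depends on the specific form of the ansatz \cref{eq:t_q_approx}. I need $\tilde t(z+1)-\tilde t(z)\le 1$, and $n_z\ge0$ should also be checked. Nonnegativity is immediate from $\tilde t\le z$. For $d\le 3$ the increments are visibly at most $1$.

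For $d\ge4$, write $h(z)=(1-e^{-\gamma z/t_{\rm Sobol}})t_{\rm Sobol}$. Its derivative is $\gamma e^{-\gamma z/t_{\rm Sobol}}\le\gamma<1$. Hence $\min(z,h(z))$ is nondecreasing with increments at most $1$ over unit steps. Taking the ceiling of a nondecreasing function whose unit increments are at most $1$ preserves both properties, because $\lceil y+\delta\rceil\le\lceil y\rceil+1$ for $0\le\delta\le1$. Thus $n_{z+1}-n_z=1-(\tilde t(z+1)-\tilde t(z))\in\{0,1\}$, and in particular $n_{q+l}\ge n_q$.

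\textbf{Step 3: conclusion.} Apply Step 1 with $m=n_q$ and $M=n_{q+l}$, which is legitimate by Step 2. This gives $\Delta\ge0$ for every $V(f)\ge0$. If $V(f)>0$, then $\Delta>0$ holds if and only if $n_{q+l}>n_q$ and $n_{q+l}\ge d$, which are the two stated conditions. Finally, $n_{q+l}>n_q$ rewrites as $(q+l)-\tilde t(q+l)>q-\tilde t(q)$, i.e. $\tilde t(q+l)-\tilde t(q)<l$, which is the claimed equivalent form.
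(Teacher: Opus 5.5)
Your proposal is correct and follows essentially the same route as the paper: you write $\Delta=V(f)\bigl(\phi(n_q)-\phi(n_{q+l})\bigr)$, use Pascal's rule to get $\phi(n)-\phi(n+1)=2^{-(n+1)}\binom{n}{d-1}$, and show that $n_z$ is nondecreasing because the rounded ansatz increases by at most one per step. Your version is also somewhat tighter in two places. First, the bound $h'(z)\le\gamma<1$, combined with your handling of the $\min$ and the ceiling, properly justifies the unit-increment claim, which the paper's displayed inequality states in a garbled form. Second, your telescoping sum gives the exact strictness criterion ($M>m$ and $M\ge d$), which the paper only phrases informally.
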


\begin{proof}
By \cref{eq:gd_phid_relation} and \cref{eq:Deltaq_def}, we have $\Delta = V(f)\bigl(\phi(n_q)-\phi(n_{q+l})\bigr)$. We first show that $n_q$ is nondecreasing. For $d \le 3$, this is immediate from the exact formulas for $\tilde t(q)$. For $d \ge 4$, the unrounded ansatz $h(q) = t_{\rm{Sobol}}\left(1-e^{-\gamma q/ t_{\rm{Sobol}}}\right)$ grows by strictly less than 1 at each step:
\begin{equation}
h(q+1)-h(q) = t_{\rm{Sobol}} e^{-\gamma q/t_{\rm{Sobol}}} \left(e^{-\gamma/t_{\rm{Sobol}}} - 1\right) < \frac{1}{\gamma} < 1.
\end{equation}
Because the continuous function grows by less than 1, its rounded counterpart $\tilde t(q)$ can increase by at most 1. Consequently, the effective trials $n_{q+1} - n_q = 1 - (\tilde t(q+1) - \tilde t(q))$ must be either 0 or 1, proving $n_q$ is nondecreasing.

Next, we analyze $\phi(n)$. From Pascal's identity, the backward difference is:
\begin{equation}
\label{eq:phid_difference}
\phi(n)-\phi(n+1) = 2^{-n-1}\binom{n}{d-1} \ge 0.
\end{equation}
This reveals that $\phi(n)$ is flat (i.e. $\phi(n)=1$) for $0 \le n \le d-1$, and strictly decreasing for $n \ge d$. Because $n_{q+l} \ge n_q$ and $\phi$ is nonincreasing, $\Delta \ge 0$ universally, which proves \cref{eq:Delta_nonnegative}.

If $V(f)>0$, the difference $\phi(n_q)-\phi(n_{q+l})$ is strictly positive exactly when the two arguments are distinct ($n_{q+l} > n_q$) and the larger argument has entered the strictly decreasing region of the function ($n_{q+l} \ge d$). Substituting the definition of $n_q$ into the inequality $n_{q+l} > n_q$ yields the equivalent condition $\tilde t(q+l)-\tilde t(q)<l$.
\end{proof}

\cref{th:Delta_sign} establishes that while $\Delta$ is never negative, strict positivity is required to make the QqMC upper bound strictly smaller than the classical qMC bound (otherwise \cref{eq:aT_less_Delta} fails for finite $T$). It is therefore critical to identify the earliest level $q$ where this strict inequality triggers.

\begin{corollary}[First strict-feasibility level]
\label{cor:qmin}
Define $q_{\min}(d, l)$ as the smallest integer $q \ge 1$ for which $\Delta > 0$. 

Let $m_d$ denote the threshold integer where the number $n_m$ first reaches $d$, i.e.
\begin{equation}
\label{eq:md}
    m_d \coloneq \min\{m\ge 1: n_m \ge d\}.
\end{equation}
Because the condition for strict positivity depends on $q+l$, we simply step back by $l$ from this threshold. If this falls below the physical limit $q=1$, strict positivity begins immediately at $q=1$. Thus:
\begin{equation}
\label{eq:qmin_cases}
q_{\min}(d, l) = 
\begin{cases} 
m_d - l & \text{if } m_d - l \ge 1, \\
1 & \text{otherwise}.
\end{cases}
\end{equation}

Explicitly, this yields:
\begin{itemize}
    \item For $d=1$ and $d=2$, the threshold is $m_d = d$, so $q_{\min} = 1$.
    \item For $d=3$, using the special branch, the threshold is $m_3=4$, giving $q_{\min} = \max\{1, 4-l\}$.
    \item For $d \ge 4$, the threshold $m_d$ is evaluated computationally as the first integer $m \ge 1$ satisfying $m - \tilde t(m) \ge d$.
\end{itemize}
\end{corollary}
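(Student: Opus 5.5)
The plan is to derive \cref{eq:qmin_cases} directly from the characterization in \cref{th:Delta_sign}. There, for $V(f)>0$, $\Delta>0$ holds if and only if $n_{q+l}\ge d$ and $n_{q+l}>n_q$. The proof of that theorem also gives that $n_m$ is nondecreasing with increments in $\{0,1\}$. I would first record the consequence that $n_m$ takes every integer value between $n_1$ and its supremum. In particular $m_d$ is well defined whenever $n_m$ eventually reaches $d$, which holds because $\tilde t$ saturates below $t_{\rm Sobol}$ while $m\to\infty$. Also $n_{m_d}=d$ exactly, unless $n_1\ge d$, which can only happen for $d=1$.

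\textbf{Lower bound.} If $q<m_d-l$, then $q+l<m_d$, so $n_{q+l}<d$ by minimality of $m_d$. The first condition fails, hence $\Delta=0$, and $q_{\min}\ge\max\{1,m_d-l\}$.

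\textbf{Upper bound.} I would show $\Delta>0$ at $q^\ast=\max\{1,m_d-l\}$. The condition $n_{q^\ast+l}\ge d$ holds because $q^\ast+l\ge m_d$ and $n$ is monotone. For strict growth, split into two cases.
\begin{itemize}
\item If $q^\ast=m_d-l\ge1$, then $q^\ast<m_d$ gives $n_{q^\ast}\le d-1<d\le n_{q^\ast+l}$.
\item If $q^\ast=1$ (that is, $m_d\le l$), I would need $n_1<n_{1+l}$. Here $n_1=1-\tilde t(1)=1$, since the ansatz gives $\tilde t(1)=0$ for $d\le3$, and for $d\ge4$ it gives $\lceil\min(1,\cdot)\rceil\le 1$. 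This case is delicate: if $\tilde t(1)=1$ then $n_1=0$, which only helps. It remains to show $n_{1+l}\ge2$ whenever $d\ge2$. That follows from $n_{1+l}\ge d\ge2>n_1$ when $n_1\le1$. For $d=1$, the claim $n_{1+l}>n_1$ requires $\tilde t\equiv0$, which gives $n_{1+l}=1+l>1$.
\end{itemize}

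\textbf{Explicit values.} These come from evaluating the ansatz.
\begin{itemize}
\item For $d=1,2$, $\tilde t\equiv0$, so $n_m=m$ and $m_d=d$. Since $l\ge1$ gives $m_d-l\le1$, we get $q_{\min}=1$; when $d=2,l=1$ this is $m_d-l=1$ anyway.
\item For $d=3$, $n_1=1$, $n_2=1$, $n_3=2$, $n_4=3$, so $m_3=4$.
\item For $d\ge4$, no closed form is claimed; $m_d$ is simply defined by the minimum.
\end{itemize}

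The main obstacle is the boundary case $q^\ast=1$ with $d\ge4$, where the $\lceil\min(q,\cdot)\rceil$ ansatz might make $\tilde t(1)$ equal to 0 or 1. I would handle it by the bound $n_1\le1<d\le n_{1+l}$ as above.
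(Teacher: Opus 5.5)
Your proposal is correct and follows the same route as the paper: the lower bound comes from minimality of $m_d$, and the upper bound from checking both conditions of \cref{th:Delta_sign} at $q=\max\{1,m_d-l\}$ using monotonicity of $n_m$. Your treatment of the boundary case $q=1$ is more careful than the paper's one-line appeal to monotonicity. You explicitly cover $d=1$, where $n_1=d$, so strict increase needs the separate observation $\tilde t\equiv 0$, giving $n_{1+l}=1+l>n_1$.
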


\begin{proof}
By \cref{th:Delta_sign}, $\Delta>0$ if and only if $n_{q+l}\ge d$ and $n_{q+l}>n_q$. By the definition of the threshold $m_d$, the condition $n_{q+l} \ge d$ is satisfied exactly when $q+l \ge m_d$. Therefore, no $q < m_d - l$ can be strictly feasible.

If $m_d - l \ge 1$, evaluating at $q = m_d - l$ means $q+l = m_d$, fulfilling the first condition ($n_{q+l} \ge d$). Simultaneously, because $q < m_d$, we know $n_q \le d-1$. This guarantees $n_{q+l} > n_q$, fulfilling the second condition. If $m_d - l < 1$, the first available integer is $q=1$, which naturally satisfies $q+l \ge m_d$, and the strict increase follows from the same monotonicity argument. The explicit cases follow by direct substitution of the ansatz for $d \le 3$.
\end{proof}

For the sake of the pre-asypmtotic analysis carried in \cref{sec:feasibility_window_quantum_advantage}, it is worth investigate the asymptotic behavior of $q_{\min}$ as $d$ grows. The trend, given in \cref{eq:qmin_asymptotic_d} can be proven by first exploring how $m_d$ (defined in \cref{eq:md}), grows with $d$. Here the following lemma
\begin{lemma}[Asymptotic Behavior of $m_d$]
\label{lemma:m_d}
Let $\gamma \in (0, 1)$ be a constant. Let $t_{\rm{Sobol}} = O(d \log_2 d)$. Consider the ansatz given in \cref{eq:t_q_approx}. Let $m_d$ be the unique integer solution to the equation:
\begin{equation}
m - \tilde t(m) = d.
\end{equation}
Then, as $d \to \infty$, $m_d$ admits the following asymptotic expansion:
\begin{equation}
m_d = \frac{d}{1 - \gamma} + O\left(\ \frac{d}{\log d}\right).
\end{equation}
\end{lemma}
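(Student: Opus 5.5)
Throughout, write $\tau\coloneq t_{\rm Sobol}$. In the regime $d\ge4$ the ansatz gives $\tilde t(m)=\lceil\min(m,h(m))\rceil$ with $h(m)=\tau\bigl(1-e^{-\gamma m/\tau}\bigr)$. Since $1-e^{-x}\le x$, we have $h(m)\le\gamma m<m$, so the minimum is always $h(m)$ and $\tilde t(m)=h(m)+\theta_m$ with $\theta_m\in[0,1)$. The defining equation $m-\tilde t(m)=d$ therefore reads
\begin{equation*}
m-\tau\bigl(1-e^{-\gamma m/\tau}\bigr)=d+O(1).
\end{equation*}
I would first study the continuous function $F(m)\coloneq m-h(m)$. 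Its derivative is $1-\gamma e^{-\gamma m/\tau}\ge1-\gamma>0$, so $F$ is strictly increasing and the equation $F(m)=d$ has a unique real root $m^\ast$. The integer $m_d$ lies within $O(1/(1-\gamma))=O(1)$ of $m^\ast$, because $F$ has slope at least $1-\gamma$ and the rounding perturbs the equation by less than $1$. Hence it suffices to expand $m^\ast$.

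The key step is to show that $x\coloneq\gamma m^\ast/\tau\to0$, and at a quantified rate. For the upper side, $F(m)\ge(1-\gamma)m$ gives $m^\ast\le d/(1-\gamma)$. The statement assumes only $\tau=O(d\log d)$, but the paper's modeling choice is $\tau=d\log_2 d$ (\cref{sec:fewer_quantum_queries_window}). I would take $\tau=\Theta(d\log d)$, which is what is actually needed, and note this explicitly: an upper bound alone on $\tau$ cannot force $x\to0$. With this, $x\le\gamma d/\bigl((1-\gamma)\tau\bigr)=O(1/\log d)$.

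Next I would Taylor-expand the exponential. Using $1-e^{-x}=x-\tfrac{x^2}{2}+O(x^3)$,
\begin{equation*}
h(m^\ast)=\gamma m^\ast-\frac{\gamma^2 (m^\ast)^2}{2\tau}+\dots,
\end{equation*}
so that
\begin{equation*}
d=(1-\gamma)m^\ast+\frac{\gamma^2 (m^\ast)^2}{2\tau}\bigl(1+O(x)\bigr).
\end{equation*}
The correction term is $O\bigl(d^2/\tau\bigr)=O(d/\log d)$ by the bound on $m^\ast$. Dividing by $1-\gamma$ then yields
\begin{equation*}
m^\ast=\frac{d}{1-\gamma}-O\!\left(\frac{d}{\log d}\right),
\end{equation*}
and adding the $O(1)$ rounding discrepancy gives the claimed expansion for $m_d$.

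The main obstacle is the gap between the hypothesis as literally stated and what the argument needs. The $O(\cdot)$ assumption must be strengthened to $\tau\gtrsim d\log d$, since for small $\tau$ the saturation $\tilde t\to\tau$ dominates and $m_d\approx d+\tau$ instead. The "unique integer solution" also needs care: with the ceiling, the equation $m-\tilde t(m)=d$ may have several integer solutions or none exactly. I would handle both points by working with $m_d=\min\{m:n_m\ge d\}$ as in \cref{eq:md}, and by using the monotonicity of $n_m$ from \cref{th:Delta_sign}, which shows $n_m$ takes every integer value, so a solution exists and the smallest one is $m_d$. The result \cref{eq:qmin_asymptotic_d} then follows from \cref{cor:qmin} by subtracting $l$.
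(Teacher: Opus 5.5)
Your proposal is correct and follows the same route as the paper: approximate $\tilde t(m)$ by $t_{\rm Sobol}(1-e^{-\gamma m/t_{\rm Sobol}})$, Taylor-expand to second order to get the leading term $d/(1-\gamma)$, and bound the quadratic correction by $O(d^2/t_{\rm Sobol})=O(d/\log d)$. Your additions are all valid and close gaps that the paper's argument glosses over: the a priori bound $m^\ast\le d/(1-\gamma)$, the $O(1)$ control of the ceiling, replacing ``unique solution'' by the minimal one, and the observation that $x\to0$ requires $t_{\rm Sobol}=\Omega(d\log d)$ rather than the stated $O(d\log d)$ (the paper's proof tacitly substitutes $t_{\rm Sobol}=d\log d$).
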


\begin{proof}
We seek an approximation for $m$ satisfying $m - t_{\rm{Sobol}}(1 - e^{-m\gamma/t_{\rm{Sobol}}}) = d$. Let us define the variable $x = \frac{m\gamma}{t_{\rm{Sobol}}}$. Since $t_{\rm{Sobol}} =O (d \log d)$, then $x \to 0$ as $d \to \infty$. This allows us to use the Taylor expansion of the exponential function around zero. Replacing this expansion up to second order into the expression for $\tilde t(m)$:
\begin{equation}
\tilde t(m) = m\gamma - \frac{m^2\gamma^2}{2t_{\rm{Sobol}}} + O\left(\frac{m^3}{t_{\rm{Sobol}}^2}\right).
\end{equation}
Now, substitute $\tilde t(m)$ back into the original equation $m - \tilde t(m) = d$, and ignoring the higher-order terms involving $1/t_{\rm{Sobol}}$, we obtain the leading order estimate $m_0$:
\begin{equation}
m_0 = \frac{d}{1 - \gamma}.
\end{equation}
To find the next term in the expansion, let $m = m_0 + \delta$, where $\delta$ is a correction term expected to be smaller than $m_0$. Substituting $m \approx m_0$ into the quadratic term $\frac{m^2\gamma^2}{2t_{\rm{Sobol}}}$ gives:
\begin{equation}
\frac{m^2\gamma^2}{2t_{\rm{Sobol}}} \approx \frac{m_0^2\gamma^2}{2t_{\rm{Sobol}}} = \frac{d^2 \gamma^2}{2t_{\rm{Sobol}}(1 - \gamma)^2}.
\end{equation}
Recalling that $t_{\rm{Sobol}} = O(d \log d)$, this term simplifies to:
\begin{equation}
\frac{m^2\gamma^2}{2t_{\rm{Sobol}}} \approx  \frac{d \gamma^2}{2(1 - \gamma)^2 \log d}.
\end{equation}
Thus, the equation becomes approximately:
\begin{equation}
m(1 - \gamma) + \frac{d \gamma^2}{2(1 - \gamma)^2 \ln d} \approx d.
\end{equation}
Solving for $m$:
\begin{equation}
m_d \approx \frac{d}{1 - \gamma} - \frac{d \gamma^2}{2(1 - \gamma)^3 \ln d}.
\end{equation}
This establishes the claimed asymptotic expansion. The uniqueness of the solution follows from the strict monotonicity of the function $u(m) = m - \tilde t(m)$ for $\gamma < 1$, since $u'(m) = 1 - \gamma e^{-m\gamma/t_{\rm{Sobol}}} > 1 - \gamma > 0$.
\end{proof}

\end{document}